\documentclass[11pt]{article}
\usepackage{amsmath,amssymb,amscd,amsthm}
\usepackage{epsfig}

\headheight=8pt \topmargin=0pt \textheight=624pt \textwidth=450pt
\oddsidemargin=18pt \evensidemargin=18pt

\newtheorem{theorem}{Theorem}
\newtheorem{lemma}{Lemma}
\newtheorem{proposition}{Proposition}
\newtheorem{remark}{Remark}

\newtheorem{corollary}{Corollary}

\newtheorem{claim}{Claim}


\newcommand{\f}[2]{\frac{#1}{#2}}
\newcommand{\dpr}[2]{\langle #1,#2 \rangle}


\newcommand{\de}{\delta}

\newcommand{\ve}{\varepsilon}

\newcommand{\la}{\lambda}

\newcommand{\si}{\sigma}

\newcommand{\om}{\omega}






\newcommand{\suml}{\sum\limits}


\newcommand{\beq}{\begin{equation}}
\newcommand{\eeq}{\end{equation}}
\newcommand{\beqna}{\begin{eqnarray*}}
\newcommand{\eeqna}{\end{eqnarray*}}
\newcommand{\beqn}{\begin{equation*}}
\newcommand{\eeqn}{\end{equation*}}
\newcommand{\bp}{\begin{proof}}
\newcommand{\ep}{\end{proof}}
\newcommand{\bprop}{\begin{proposition}}
\newcommand{\eprop}{\end{proposition}}
\newcommand{\bt}{\begin{theorem}}
\newcommand{\et}{\end{theorem}}
\newcommand{\bex}{\begin{Example}}
\newcommand{\eex}{\end{Example}}
\newcommand{\bc}{\begin{corollary}}
\newcommand{\ec}{\end{corollary}}
\newcommand{\bcl}{\begin{claim}}
\newcommand{\ecl}{\end{claim}}
\newcommand{\bl}{\begin{lemma}}
\newcommand{\el}{\end{lemma}}

\newenvironment{proof1}%
{\begin{trivlist} \item[]{\em Proof }}%
{\hspace*{\fill}$\rule{.3\baselineskip}{.35\baselineskip}$\end{trivlist}}

\begin{document}

\title{\bf Asymptotic stability of small solitons in the discrete
nonlinear Schr\"{o}dinger equation in one dimension}

\author{P.G. Kevrekidis$^1$, D.E. Pelinovsky$^2$, and A.
Stefanov$^3$ \\
{\small $^{1}$ Department of Mathematics and Statistics, University
of Massachusetts, Amherst, MA 01003}\\
{\small $^{2}$ Department of Mathematics and Statistics, McMaster
University,
Hamilton, Ontario, Canada, L8S 4K1} \\
{\small $^{3}$ Department of Mathematics, University of Kansas, 1460
Jayhawk Blvd, Lawrence, KS 66045--7523} }

\maketitle

\begin{abstract}
Asymptotic stability of small solitons in one dimension is proved
in the framework of a discrete nonlinear Schr\"{o}dinger equation with septic and
higher power-law nonlinearities and an external potential supporting a
simple isolated eigenvalue. The analysis relies on the dispersive
decay estimates from Pelinovsky \& Stefanov (2008) and the
arguments of Mizumachi (2008) for a continuous nonlinear
Schr\"{o}dinger equation in one dimension. Numerical simulations suggest that the
actual decay rate of perturbations near the asymptotically stable
solitons is higher than the one used in the analysis.
\end{abstract}

\section{Introduction}

Asymptotic stability of solitary waves in the context of
continuous nonlinear Schr\"{o}dinger equations in one, two, and
three spatial dimensions was considered in a number of recent
works (see Cuccagna \cite{cuccagna} for a review of literature). 
Little is known, however, about asymptotic
stability of solitary waves in the context of discrete nonlinear
Schr\"{o}dinger (DNLS) equations.

Orbital stability of a global energy minimizer under a fixed
mass constraint was proved by Weinstein \cite{weinstein} for the
DNLS equation with power nonlinearity
$$
i \dot{u}_n + \Delta_d u_n + |u_n|^{2 p} u_n = 0, \quad n \in
\mathbb{Z}^d,
$$
where $\Delta_d$ is a discrete Laplacian in $d$ dimensions and $p
> 0$. For $p < \frac{2}{d}$ (subcritical case), it is
proved that the ground state of an arbitrary energy exists,
whereas for $p \geq \frac{2}{d}$ (critical and supercritical
cases), there is an energy threshold, below which the ground state
does not exist.

Ground states of the DNLS equation with power-law nonlinearity
correspond to single-humped solitons, which are excited in
numerical and physical experiments by a single-site initial data
with sufficiently large amplitude \cite{KEDS}. Such experiments
have been physically realized in optical settings with both
focusing \cite{mora} and defocusing \cite{rosberg} nonlinearities.
We would like to consider long-time dynamics of the ground states
and prove their asymptotic stability under some assumptions on the
spectrum of the linearized DNLS equation. From
the beginning, we would like to work in the space of one spatial
dimension $(d = 1)$ and to add an external potential $V$ to the
DNLS equation. These specifications are motivated by physical
applications (see, e.g., the recent work of \cite{kroli} and
references therein for a relevant discussion). We hence write the
main model in the form
\begin{equation}
\label{dNLS} i \dot{u}_n = (-\Delta + V_n) u_n + \gamma |u_n|^{2p} u_n,
\quad n \in \mathbb{Z},
\end{equation}
where $\Delta u_n := u_{n+1} - 2 u_n + u_{n-1}$ and $\gamma = 1$
($\gamma = -1$) for defocusing (focusing) nonlinearity. Besides
physical applications, the role of potential $V$ in our work can
be explained by looking at the differences between the recent
works of Mizumachi \cite{Miz} and Cuccagna \cite{Cuc} for a
continuous nonlinear Schr\"{o}dinger equation in one dimension.
Using an external potential, Mizumachi proved asymptotic stability
of small solitons bifurcating from the ground state of the
Schrodinger operator $H_0 = -\partial_x^2 + V$ under some
assumptions on the spectrum of $H_0$. He needed only spectral
theory of the self-adjoint operator $H_0$ in $L^2$ since spectral
projections and small nonlinear terms were controlled in the
corresponding norm. Pioneering works along the same lines are
attributed to Soffer--Weinstein \cite{SW1,SW2,SW3}, Pillet \&
Wayne \cite{PW}, and Yao \& Tsai \cite{YT1,YT2,YT3}. Compared to
this approach, Cuccagna proved asymptotic stability of nonlinear
space-symmetric ground states in energy space of the continuous
nonlinear Schr\"{o}dinger equation with $V \equiv 0$. He had to
invoke the spectral theory of non-self-adjoint operators arising
in the linearization of the nonlinear Schr\"{o}dinger equation at
the ground state, following earlier works of Buslaev \& Perelman
\cite{BP1,BP2}, Buslaev \& Sulem \cite{BS}, and Gang \& Sigal
\cite{GS1,GS2}.

Since our work is novel in the context of the DNLS equation, we
would like to simplify the spectral formalism and to focus on
nonlinear analysis of asymptotic stability. This is the main
reason why we work with small solitons bifurcating from the ground
state of the discrete Schrodinger operator $H = -\Delta + V$. We
will make use of the dispersive decay estimates obtained recently
for operator $H$ by Stefanov \& Kevrekidis \cite{SK} (for $V
\equiv 0$), Komech, Kopylova \& Kunze \cite{KKK} (for compact
$V$), and Pelinovsky \& Stefanov \cite{PS} (for decaying $V$).
With more efforts and more elaborate analysis, our results can be
generalized to large solitons with or without potential $V$ under
some restrictions on spectrum of the non-self-adjoint operator
associated with linearization at the nonlinear ground state.

From a technical point of view, many previous works on asymptotic
stability of solitary waves in continuous nonlinear Schr\"{o}dinger equations 
address critical and supercritical cases, which in $d = 1$
corresponds to $p \geq 2$. Because the dispersive decay in
$l^1-l^{\infty}$ norm is slower for the DNLS equation, the
critical power appears at $p = 3$ and the proof of
asymptotic stability of discrete solitons can be developed for $p \geq 3$. The most
interesting case of the cubic DNLS equation for $p = 1$ is
excluded from our consideration. To prove asymptotic stability of
discrete solitons for $p \geq 3$, we extend the pointwise
dispersive decay estimates from \cite{PS} to Strichartz estimates,
which allow us for a better control of the dispersive parts of the
solution. The nonlinear analysis follows the steps in the proof of
asymptotic stability of continuous solitons by Mizumachi
\cite{Miz}.

In addition to analytical results, we also approximate time
evolution of small solitons numerically in the DNLS equation
(\ref{dNLS}) with $p = 1,2,3$. Not only we confirm the asymptotic
stability of discrete solitons in all the cases but also we find
that the actual decay rate of perturbations near the small soliton
is faster than the one used in our analytical arguments.

The article is organized as follows. The main result for $p \geq 3$ is
formulated in Section 2. Linear estimates are derived in Section
3. The proof of the main theorem is developed in Section 4.
Numerical illustrations for $p = 1, 2, 3$ are discussed in Section
5. Appendix A gives proofs of technical formulas used in Section
3.

{\bf Acknowledgement.} When the paper was essentially complete, we
became aware of a similar work of Cuccagna \& Tarulli \cite{CT},
where asymptotic stability of small discrete solitons of the DNLS
equation (\ref{dNLS}) was proved for $p \geq 3$. 

Stefanov's research is supported in part by NSF-DMS 0701802. 
Kevrekidis' research is supported in part by NSF-DMS-0806762, 
NSF-CAREER and the Alexander von Humboldt Foundation.

\section{Preliminaries and the main result}

In what follows, we use bold-faced notations for vectors in
discrete spaces $l_s^1$ and $l_s^2$ on $\mathbb{Z}$ defined by
their norms
$$
\| {\bf u} \|_{l^1_s} := \sum_{n \in \mathbb{Z}} (1+n^2)^{s/2}
|u_n|, \quad \| {\bf u} \|_{l^2_s} := \left( \sum_{n \in
\mathbb{Z}} (1+n^2)^{s} |u_n|^2 \right)^{1/2}.
$$
Components of ${\bf u}$ are denoted by regular font, e.g. $u_n$
for $n \in \mathbb{Z}$.

We shall make the following assumptions on the external potential
${\bf V}$ defined on the lattice $\mathbb{Z}$ and on the spectrum
of the self-adjoint operator $H = -\Delta + {\bf V}$ in $l^2$.

\begin{itemize}
\item[(V1)] ${\bf V} \in l^1_{2\sigma}$ for a fixed $\sigma >
\frac{5}{2}$.

\item[(V2)] ${\bf V}$ is generic in the sense that no solution
$\mbox{\boldmath $\psi$}_0$ of equation $H \mbox{\boldmath
$\psi$}_0 = 0$ exists in $l^2_{-\sigma}$ for $\frac{1}{2} < \sigma
\leq \frac{3}{2}$.

\item[(V3)] ${\bf V}$ supports exactly one negative eigenvalue
$\omega_0 < 0$ of $H$ with an eigenvector $\mbox{\boldmath
$\psi$}_0 \in l^2$ and no eigenvalues above $4$.
\end{itemize}

The first two assumptions (V1) and (V2) are needed for 
the dispersive decay estimates developed in \cite{PS}. The last
assumption (V3) is needed for existence of a family $\mbox{\boldmath
$\phi$}(\omega)$ of real-valued decaying solutions of
the stationary DNLS equation
\begin{equation}
\label{stationaryDNLS} (-\Delta + V_n) \phi_n(\omega) + \gamma
\phi_n^{2p+1}(\omega) = \omega \phi_n(\omega), \quad n \in
\mathbb{Z},
\end{equation}
near $\omega = \omega_0 < 0$. This is a standard local bifurcation
of decaying solutions in a system of infinitely many algebraic equations
(see \cite{Nirenberg} for details).

\begin{lemma}[Local bifurcation of stationary solutions]
\label{lemma-bifurcation} Assume that ${\bf V} \in l^{\infty}$ and
that $H$ has an eigenvalue $\omega_0$ with a normalized
eigenvector $\mbox{\boldmath $\psi$}_0 \in l^2$ such that $\|
\mbox{\boldmath $\psi$}_0 \|_{l^2} = 1$. Let $\epsilon := \omega -
\omega_0$, $\gamma = +1$, and $\epsilon_0 > 0$ be sufficiently
small. For any $\epsilon \in (0,\epsilon_0)$, there exists an
$\epsilon$-independent constant $C > 0$ such that the stationary
DNLS equation (\ref{stationaryDNLS}) admits a solution
$\mbox{\boldmath $\phi$}(\omega) \in
C^2([\omega_0,\omega_0+\epsilon_0],l^2)$ satisfying
$$
\left\| \mbox{\boldmath $\phi$}(\omega) -
\frac{\epsilon^{\frac{1}{2p}} \mbox{\boldmath $\psi$}_0}{\| \mbox{\boldmath $\psi$}_0
\|^{1+\frac{1}{p}}_{l^{2p+2}}}
\right\|_{l^2} \leq C \epsilon^{1 + \frac{1}{2p}}.
$$
Moreover, the solution $\mbox{\boldmath $\phi$}(\omega)$ decays
exponentially to zero as $|n| \to \infty$.
\end{lemma}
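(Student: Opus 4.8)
The plan is to perform a Lyapunov--Schmidt reduction on a suitably rescaled equation. Writing $H = -\Delta + {\bf V}$ and $\epsilon = \omega - \omega_0 > 0$, I would substitute $\mbox{\boldmath $\phi$} = \epsilon^{1/(2p)} {\bf w}$ into (\ref{stationaryDNLS}) with $\gamma = +1$ and divide by $\epsilon^{1/(2p)}$, which recasts the stationary equation as
\[
(H - \omega_0) {\bf w} = \epsilon \left( {\bf w} - {\bf w}^{2p+1} \right),
\]
the power taken componentwise. Since $\omega_0$ is an isolated simple eigenvalue of the self-adjoint operator $H$, I would let $P$ be the orthogonal projection in $l^2$ onto $\mbox{\boldmath $\psi$}_0$ and $Q = I - P$, so that $H - \omega_0$ restricted to $\mathrm{Ran}(Q)$ has a bounded inverse $(H-\omega_0)^{-1}$, and decompose ${\bf w} = a \mbox{\boldmath $\psi$}_0 + {\bf v}$ with $a = \langle {\bf w}, \mbox{\boldmath $\psi$}_0 \rangle \in \mathbb{R}$ and ${\bf v} \in \mathrm{Ran}(Q)$.

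Applying $Q$ gives the fixed-point equation ${\bf v} = \epsilon (H-\omega_0)^{-1} Q\big( {\bf v} - (a \mbox{\boldmath $\psi$}_0 + {\bf v})^{2p+1} \big)$. Using $l^2(\mathbb{Z}) \hookrightarrow l^\infty(\mathbb{Z})$, the nonlinear map ${\bf w} \mapsto {\bf w}^{2p+1}$ is smooth and locally Lipschitz from $l^2$ to $l^2$, so for $|a|$ in a bounded range and $\epsilon \in (0,\epsilon_0)$ with $\epsilon_0$ small the right-hand side is a contraction on a fixed small ball of $\mathrm{Ran}(Q)$; this yields ${\bf v} = {\bf v}(a,\epsilon)$, smooth in $(a,\epsilon)$, with $\| {\bf v}(a,\epsilon) \|_{l^2} = O(\epsilon)$ uniformly in $a$. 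Pairing the rescaled equation with $\mbox{\boldmath $\psi$}_0$ and using self-adjointness ($\langle (H-\omega_0){\bf w}, \mbox{\boldmath $\psi$}_0 \rangle = 0$), then dividing by $\epsilon$, produces the scalar bifurcation equation $a = g(a,\epsilon) := \langle (a\mbox{\boldmath $\psi$}_0 + {\bf v}(a,\epsilon))^{2p+1}, \mbox{\boldmath $\psi$}_0 \rangle$. At $\epsilon = 0$ one has ${\bf v}(a,0)=0$ and $g(a,0) = a^{2p+1}\|\mbox{\boldmath $\psi$}_0\|_{l^{2p+2}}^{2p+2}$ (take $\mbox{\boldmath $\psi$}_0$ real, so this norm is positive), hence $F(a,\epsilon):= a - g(a,\epsilon)$ vanishes at $(a_0,0)$ with $a_0 = \|\mbox{\boldmath $\psi$}_0\|_{l^{2p+2}}^{-(1+1/p)}$ and $\partial_a F(a_0,0) = 1 - (2p+1) = -2p \neq 0$; the implicit function theorem then gives a branch $a = a(\epsilon)$, $a(0) = a_0$, smooth in $\epsilon$.

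Undoing the rescaling, I would set ${\bf w}(\epsilon) = a(\epsilon)\mbox{\boldmath $\psi$}_0 + {\bf v}(a(\epsilon),\epsilon)$ and $\mbox{\boldmath $\phi$}(\omega) = \epsilon^{1/(2p)} {\bf w}(\epsilon)$. Since ${\bf w}(\epsilon) = a_0\mbox{\boldmath $\psi$}_0 + O(\epsilon)$ in $l^2$, the advertised estimate follows from
\[
\big\| \mbox{\boldmath $\phi$}(\omega) - \epsilon^{1/(2p)} a_0 \mbox{\boldmath $\psi$}_0 \big\|_{l^2} = \epsilon^{1/(2p)} \| {\bf w}(\epsilon) - a_0 \mbox{\boldmath $\psi$}_0 \|_{l^2} = O\big(\epsilon^{1 + 1/(2p)}\big),
\]
and the regularity in $\omega$ is inherited from smoothness of ${\bf w}(\epsilon)$ together with smoothness of $\epsilon \mapsto \epsilon^{1/(2p)}$ for $\epsilon \in (0,\epsilon_0]$. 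Exponential decay I would obtain separately: $\mbox{\boldmath $\phi$}(\omega)$ solves $(H-\omega)\mbox{\boldmath $\phi$} = -\mbox{\boldmath $\phi$}^{2p+1}$, and since $\omega < 0$ lies below the spectrum $[0,4]$ of $-\Delta$ and ${\bf V}$ decays, a standard Combes--Thomas (comparison) argument for discrete Schr\"odinger operators at negative energy forces geometric decay in $|n|$.

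The argument is essentially a routine local bifurcation; the only points requiring care are the choice of the scaling exponent $\tfrac{1}{2p}$ (dictated by balancing $(H-\omega_0)\mbox{\boldmath $\phi$} \sim \epsilon \mbox{\boldmath $\phi$}$ against $\mbox{\boldmath $\phi$}^{2p+1}$), the verification that the contraction for ${\bf v}$ closes uniformly in $a$ with the nonlinearity handled through $l^2 \hookrightarrow l^\infty$, and the bookkeeping that combines the $\epsilon^{1/(2p)}$ prefactor with the $O(\epsilon)$ correction of the rescaled profile to produce the stated $l^2$ bound.
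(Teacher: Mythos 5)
Your proposal is correct, and in fact the paper gives no proof of this lemma at all---it simply refers to it as ``a standard local bifurcation of decaying solutions in a system of infinitely many algebraic equations'' with a citation to Nirenberg; your Lyapunov--Schmidt reduction (rescaling by $\epsilon^{1/(2p)}$, inverting $H-\omega_0$ on the range of $Q$, and solving the scalar equation $a=a^{2p+1}\|\mbox{\boldmath $\psi$}_0\|_{l^{2p+2}}^{2p+2}+O(\epsilon)$ by the implicit function theorem) is exactly the standard argument being invoked, with the correct scaling exponent and leading amplitude. The only point worth being explicit about is that the contraction step and the exponential decay (via Combes--Thomas for $H+\mbox{\boldmath $\phi$}^{2p}$) use that $\omega_0$ is a simple eigenvalue isolated from the essential spectrum, which the lemma's bare hypothesis ${\bf V}\in l^\infty$ does not state but which is supplied by (V1)--(V3) in the paper's context.
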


\begin{remark}
\label{remark-bifurcation} Because of the exponential decay of
$\mbox{\boldmath $\phi$}(\omega)$ as $|n| \to \infty$, the
solution $\mbox{\boldmath $\phi$}(\omega)$ exists in $l^2_{\si}$
for all $\si \geq 0$. In addition, since $ \| \mbox{\boldmath
$\phi$}\|_{l^1} \leq C_{\si} \| \mbox{\boldmath $\phi$}
\|_{l^2_{\si}}, $ for any $\si > \frac{1}{2}$, the solution
$\mbox{\boldmath $\phi$}(\omega)$ also exists in $l^1$.
\end{remark}

\begin{remark}
The case $\gamma = -1$ with the local bifurcation to the domain
$\omega < \omega_0$ is absolutely analogous. For simplification,
we shall develop analysis for $\gamma = +1$ only.
\end{remark}

To work with solutions of the DNLS equation (\ref{dNLS}) for all
$t \in {\mathbb R}_+$ starting with some initial data at $t = 0$,
we need global well-posedness of the Cauchy problem for
(\ref{dNLS}). Because $H$ is a bounded operator from $l^2$ to
$l^2$, global well-posedness for (\ref{dNLS}) follows from simple
arguments based on the flux conservation equation
\begin{equation}
\label{balance} i \frac{d}{dt} |u_n|^2 = u_n (\bar{u}_{n+1} +
\bar{u}_{n-1}) - \bar{u}_n (u_{n+1}+u_{n-1})
\end{equation}
and the contraction mapping arguments (see \cite{PP} for details).

\begin{lemma}[Global well-posedness]
\label{lemma-wellposedness} Fix $\si \geq 0$. For any ${\bf u}_0
\in l^2_{\si}$, there exists a unique solution ${\bf
u}(t) \in C^1(\mathbb{R}_+,l^2_{\si})$ such that ${\bf
u}(0) = {\bf u}_0$ and ${\bf u}(t)$ depends continuously on
${\bf u}_0$.
\end{lemma}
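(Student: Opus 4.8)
The plan is to obtain the statement by the standard fixed-point scheme for a semilinear evolution equation whose linear part is bounded, together with an \emph{a priori} bound on the weighted norm extracted from conservation of the unweighted $l^2$ norm. First I would record that the linear flow is well behaved on $l^2_\si$. The discrete Laplacian $-\De$ acts through the two shifts $u_n\mapsto u_{n\pm 1}$, and since $(1+(n\pm1)^2)^{\si}\le C_\si(1+n^2)^{\si}$ the operator $-\De$ is bounded on $l^2_\si$; multiplication by $\mathbf V\in l^\infty$ (which holds because (V1) gives $l^1_{2\si}\subset l^\infty$) is likewise bounded on $l^2_\si$. Hence $H=-\De+\mathbf V$ is a bounded self-adjoint operator on $l^2$ that also maps $l^2_\si$ boundedly into itself, and $e^{-itH}$ is a one-parameter group, unitary on $l^2$ and satisfying $\|e^{-itH}\|_{l^2_\si\to l^2_\si}\le e^{|t|\,\|H\|_{l^2_\si\to l^2_\si}}$.

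Next I would check that the pointwise nonlinearity $N(\mathbf u)_n:=\ga|u_n|^{2p}u_n$ is Lipschitz on bounded subsets of $l^2_\si$. Using $\|\mathbf u\|_{l^\infty}\le\|\mathbf u\|_{l^2}\le\|\mathbf u\|_{l^2_\si}$ and the elementary bound $||a|^{2p}a-|b|^{2p}b|\le C_p(|a|^{2p}+|b|^{2p})|a-b|$ applied termwise,
$$
\|N(\mathbf u)-N(\mathbf v)\|_{l^2_\si}\le C_p\left(\|\mathbf u\|_{l^2_\si}^{2p}+\|\mathbf v\|_{l^2_\si}^{2p}\right)\|\mathbf u-\mathbf v\|_{l^2_\si},\qquad N(\mathbf 0)=\mathbf 0.
$$
With this, I would rewrite (\ref{dNLS}) in Duhamel form $\mathbf u(t)=e^{-itH}\mathbf u_0-i\int_0^t e^{-i(t-s)H}N(\mathbf u(s))\,ds$ and apply the contraction mapping theorem in $C([0,T],l^2_\si)$ on the ball of radius $2\|\mathbf u_0\|_{l^2_\si}$; the two displays above make the map a contraction for some $T=T(\|\mathbf u_0\|_{l^2_\si})>0$ (this is the argument referenced in \cite{PP}). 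This gives a unique local solution, and comparing the Duhamel integrals for two data $\mathbf u_0,\mathbf v_0$ and invoking Gronwall yields $\|\mathbf u(t)-\mathbf v(t)\|_{l^2_\si}\le e^{Ct}\|\mathbf u_0-\mathbf v_0\|_{l^2_\si}$ on the common existence interval, i.e. (Lipschitz) continuous dependence on the data.

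To pass from local to global solvability I would use the flux balance (\ref{balance}): summing over $n\in\zz$ all terms telescope, so $\|\mathbf u(t)\|_{l^2}=\|\mathbf u_0\|_{l^2}$ is conserved and in particular $\|\mathbf u(t)\|_{l^\infty}\le\|\mathbf u_0\|_{l^2}$ for as long as the solution exists. Feeding this into the Duhamel identity, using $\|N(\mathbf u(t))\|_{l^2_\si}\le\|\mathbf u(t)\|_{l^\infty}^{2p}\|\mathbf u(t)\|_{l^2_\si}\le\|\mathbf u_0\|_{l^2}^{2p}\|\mathbf u(t)\|_{l^2_\si}$ together with boundedness of $H$ on $l^2_\si$, Gronwall gives
$$
\|\mathbf u(t)\|_{l^2_\si}\le\|\mathbf u_0\|_{l^2_\si}\,e^{C\,t},\qquad C=C\bigl(\|\mathbf u_0\|_{l^2},\,\|\mathbf V\|_{l^\infty},\,\si\bigr),
$$
which is finite on every finite interval, so the local solution cannot blow up in $l^2_\si$ and extends to all $t\in\mathbb R_+$. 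Finally, $\mathbf u\in C(\mathbb R_+,l^2_\si)$ combined with continuity of $s\mapsto H\mathbf u(s)+N(\mathbf u(s))$ in $l^2_\si$ shows, directly from (\ref{dNLS}), that $\dot{\mathbf u}\in C(\mathbb R_+,l^2_\si)$, hence $\mathbf u\in C^1(\mathbb R_+,l^2_\si)$.

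I do not expect any genuine obstacle here: because $H$ is bounded, there is no need for semigroup theory of unbounded generators and the entire argument is routine. The only point deserving a moment's care is the global step, where one must observe that it is the \emph{unweighted} mass conservation that controls the growth of the \emph{weighted} norm — the pointwise nonlinearity on $l^2_\si$ is dominated by the factor $\|\mathbf u\|_{l^\infty}^{2p}$, which stays bounded in time — so no additional conserved quantity (such as an energy) is required.
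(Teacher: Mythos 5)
Your argument is correct and is exactly the route the paper has in mind: the paper does not write out a proof but merely notes that the result follows from the contraction mapping principle (using that $H$ is bounded) together with the $l^2$ conservation law obtained by summing the flux balance (\ref{balance}) over $n\in\mathbb{Z}$, referring to \cite{PP} for details. Your write-up fills in precisely those details, including the key observation that the unweighted mass conservation controls $\|\mathbf u\|_{l^\infty}$ and hence, via Gronwall, the growth of the weighted $l^2_\si$ norm.
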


\begin{remark}
Global well-posedness holds also on $\mathbb{R}_-$ (and thus on
$\mathbb{R}$) since the DNLS equation (\ref{dNLS}) is a reversible
dynamical system. We shall work in the positive time intervals
only.
\end{remark}

Equipped with the results above, we decompose a solution to the DNLS equation
(\ref{dNLS}) into a family of stationary solutions with time varying
parameters and a radiation part using the substitution
\begin{equation}
\label{decomposition}
{\bf u}(t) = e^{-i \theta(t)} \left( \mbox{\boldmath
$\phi$}(\omega(t)) + {\bf z}(t) \right),
\end{equation}
where $(\omega,\theta) \in \mathbb{R}^2$ represents a two-dimensional
orbit of stationary solutions ${\bf u}(t) = e^{-i\theta -i \omega t} \mbox{\boldmath
$\phi$}(\omega)$ (their time
evolution will be specified later) and ${\bf z}(t) \in
C^1(\mathbb{R}_+,l^2_{\sigma})$ solves the
time-evolution equation in the form
\begin{eqnarray}
\label{time-evolution-z} i \dot{{\bf z}} = (H-\omega) {\bf z} -
(\dot{\theta} - \omega) (\mbox{\boldmath $\phi$}(\omega) + {\bf z})
- i \dot{\omega}
\partial_{\omega} \mbox{\boldmath $\phi$}(\omega) +
{\bf N}(\mbox{\boldmath $\phi$}(\omega)+{\bf z}) - {\bf N}(\mbox{\boldmath $\phi$}(\omega)),
\end{eqnarray}
where $H = -\Delta + {\bf V}$, $[{\bf N}(\mbox{\boldmath
$\psi$})]_n = \gamma |\psi_n|^{2p} \psi_n$, and $\partial_{\omega}
\mbox{\boldmath $\phi$}(\omega)$ exists thanks to Lemma
\ref{lemma-bifurcation}. The linearized time evolution at the
stationary solution $ \mbox{\boldmath $\phi$}(\omega)$ involves
operators
$$
L_- = H - \omega + {\bf W}, \quad L_+ = H - \omega + (2p+1) {\bf
W},
$$
where $W_n = \gamma \phi_n^{2p}(\omega)$ and ${\bf W}$ decays
exponentially as $|n| \to \infty$ thanks to Lemma
\ref{lemma-bifurcation}. The linearized time evolution in
variables ${\bf v} = {\rm Re}({\bf z})$ and ${\bf w} = {\rm
Im}({\bf z})$ involves a symplectic structure which can be
characterized by the non-self-adjoint eigenvalue problem
\begin{equation}
\label{linearizedNLS} L_+ {\bf v} = - \lambda {\bf w}, \quad L_-
{\bf w} = \lambda {\bf v}.
\end{equation}
Using Lemma \ref{lemma-bifurcation}, we derive the following
result.

\begin{lemma}[Double null subspace]
For any $\epsilon \in (0,\epsilon_0)$, the linearized eigenvalue
problem (\ref{linearizedNLS}) admits a double zero eigenvalue with
a one-dimensional kernel, isolated from the rest of the spectrum.
The generalized kernel is spanned by vectors $({\bf
0},\mbox{\boldmath $\phi$}(\omega)), (- \partial_{\omega}
\mbox{\boldmath $\phi$}(\omega),{\bf 0}) \in l^2$ satisfying
$$
L_- \mbox{\boldmath $\phi$}(\omega) = {\bf 0}, \qquad L_+
\partial_{\omega} \mbox{\boldmath $\phi$}(\omega) = \mbox{\boldmath
$\phi$}(\omega).
$$
If $({\bf v},{\bf w}) \in l^2$ is symplectically orthogonal to the
double subspace of the generalized kernel, then
$$
\langle {\bf v},\mbox{\boldmath $\phi$}(\omega) \rangle = 0, \quad
\langle {\bf w},\partial_{\omega} \mbox{\boldmath $\phi$}(\omega)
\rangle = 0,
$$
where $\langle {\bf u},{\bf v} \rangle := \sum_{n \in \mathbb{Z}}
u_n \bar{w}_n$.
\end{lemma}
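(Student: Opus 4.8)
The plan is to verify directly that the two stated vectors lie in the generalized kernel, then argue minimality of its dimension via the non-degeneracy guaranteed by Lemma \ref{lemma-bifurcation}, and finally extract the symplectic-orthogonality conditions. First I would establish the identity $L_- \mbox{\boldmath $\phi$}(\omega) = \mathbf{0}$. Since $\mbox{\boldmath $\phi$}(\omega)$ solves the stationary equation (\ref{stationaryDNLS}), we have $(H-\omega)\phi_n = -\gamma\phi_n^{2p+1} = -W_n\phi_n$, i.e. $(H-\omega+\mathbf{W})\mbox{\boldmath $\phi$} = \mathbf{0}$, which is exactly $L_-\mbox{\boldmath $\phi$}=\mathbf{0}$. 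Next, differentiating (\ref{stationaryDNLS}) in $\omega$ — legitimate because $\mbox{\boldmath $\phi$}(\omega)\in C^2([\omega_0,\omega_0+\epsilon_0],l^2)$ by Lemma \ref{lemma-bifurcation} — gives $(H-\omega)\partial_\omega\phi_n - \phi_n + (2p+1)\gamma\phi_n^{2p}\,\partial_\omega\phi_n = \partial_\omega\phi_n\cdot\omega'$; organizing terms yields $L_+\partial_\omega\mbox{\boldmath $\phi$} = \mbox{\boldmath $\phi$}$. Plugging these into the eigenvalue problem (\ref{linearizedNLS}) with $\lambda=0$: the pair $(\mathbf{0},\mbox{\boldmath $\phi$})$ satisfies $L_+\mathbf{0}=\mathbf{0}$ and $L_-\mbox{\boldmath $\phi$}=\mathbf{0}$, so it is a genuine eigenvector; and the pair $(-\partial_\omega\mbox{\boldmath $\phi$},\mathbf{0})$ satisfies $L_+(-\partial_\omega\mbox{\boldmath $\phi$})=-\mbox{\boldmath $\phi$}$ and $L_-\mathbf{0}=\mathbf{0}$, so it is a Jordan (generalized) vector sitting above $(\mathbf{0},\mbox{\boldmath $\phi$})$. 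This establishes the double zero with the claimed spanning vectors.

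For the dimension count I would argue that $\mbox{\boldmath $\phi$}(\omega)$ is, up to scalar multiples, the only $l^2$ solution of $L_-\mbox{\boldmath $\psi$}=\mathbf{0}$ and that $\ker L_+ = \{\mathbf{0}\}$ in $l^2$, so the geometric kernel of (\ref{linearizedNLS}) at $\lambda=0$ is one-dimensional and the generalized kernel is exactly two-dimensional. The ground-state characterization from Lemma \ref{lemma-bifurcation} is what powers this: for small $\epsilon$, $\mbox{\boldmath $\phi$}(\omega)$ is a small perturbation of $\epsilon^{1/2p}\mbox{\boldmath $\psi$}_0/\|\mbox{\boldmath $\psi$}_0\|^{1+1/p}_{l^{2p+2}}$, hence it is strictly positive (inheriting positivity of the ground state $\mbox{\boldmath $\psi$}_0$ of the simple isolated eigenvalue $\omega_0$), so $\mbox{\boldmath $\phi$}$ is a ground state of $L_-=H-\omega+\mathbf{W}$ at the bottom of its spectrum; a positive ground state is simple, giving $\dim\ker L_- = 1$. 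For $L_+$, one uses $\langle L_+\partial_\omega\mbox{\boldmath $\phi$},\mbox{\boldmath $\phi$}\rangle = \langle\mbox{\boldmath $\phi$},\mbox{\boldmath $\phi$}\rangle = \|\mbox{\boldmath $\phi$}\|_{l^2}^2 > 0$: if $\ker L_+$ were nontrivial, the standard Vakhitov–Kolokolov/energy-slope dichotomy would be needed, but in the small-soliton regime $L_+ = H-\omega+(2p+1)\mathbf{W}$ is a small exponentially-decaying perturbation of $H-\omega$, whose only possible small eigenvalue bifurcates from $\omega_0-\omega = -\epsilon$; one checks $\langle (2p+1)\mathbf{W}\mbox{\boldmath $\psi$}_0,\mbox{\boldmath $\psi$}_0\rangle > 0$ scales like $\epsilon$ times a positive constant and dominates $-\epsilon$ for appropriate $p$, pushing that eigenvalue strictly positive, so $L_+ > 0$ and $\ker L_+=\{\mathbf{0}\}$. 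Isolation of the double zero from the rest of the spectrum of (\ref{linearizedNLS}) then follows since $L_\pm$ have spectral gaps (essential spectrum starting near $-\epsilon$ from $H$ minus $\omega$ shifted appropriately, plus no other discrete eigenvalues near zero by the perturbative scaling).

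Finally, symplectic orthogonality. The symplectic form pairing a vector $({\bf v},{\bf w})$ with the generalized-kernel basis amounts, after the standard identification, to the two conditions $\langle{\bf v},\mbox{\boldmath $\phi$}(\omega)\rangle = 0$ (orthogonality to $(\mathbf{0},\mbox{\boldmath $\phi$})$) and $\langle{\bf w},\partial_\omega\mbox{\boldmath $\phi$}(\omega)\rangle = 0$ (orthogonality to $(-\partial_\omega\mbox{\boldmath $\phi$},\mathbf{0})$); I would simply write out the symplectic form $\Omega\big(({\bf v}_1,{\bf w}_1),({\bf v}_2,{\bf w}_2)\big) = \langle{\bf v}_1,{\bf w}_2\rangle - \langle{\bf w}_1,{\bf v}_2\rangle$ and substitute. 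The main obstacle is the dimension count — specifically confirming $\ker L_+=\{\mathbf{0}\}$ and that no spurious small eigenvalues of (\ref{linearizedNLS}) collide with zero; this is where the smallness of $\epsilon$ and the exponential decay and sign of $\mathbf{W}$ (from Lemma \ref{lemma-bifurcation}) must be used quantitatively, whereas the algebraic identities $L_-\mbox{\boldmath $\phi$}=\mathbf{0}$ and $L_+\partial_\omega\mbox{\boldmath $\phi$}=\mbox{\boldmath $\phi$}$ are routine differentiations of (\ref{stationaryDNLS}).
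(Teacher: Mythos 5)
Your verification of the identities $L_-\mbox{\boldmath $\phi$}(\omega)=\mathbf{0}$ and $L_+\partial_\omega\mbox{\boldmath $\phi$}(\omega)=\mbox{\boldmath $\phi$}(\omega)$, the perturbative argument that $\ker L_+$ is trivial (the paper computes the same quantity, $\langle \mbox{\boldmath $\psi$}_0, L_+\mbox{\boldmath $\psi$}_0\rangle = 2p\gamma\epsilon+{\cal O}(\epsilon^2)$, which is positive for \emph{every} $p>0$ when $\gamma=+1$, not just for ``appropriate'' $p$), and the extraction of the symplectic orthogonality conditions all follow the paper's route. There is, however, one genuine gap: you assert that the generalized kernel is \emph{exactly} two-dimensional but never rule out a third vector in the Jordan chain. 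Since $\ker L_+=\{\mathbf{0}\}$, a length-three chain would force solving $L_-\mathbf{w}_0=-\partial_\omega\mbox{\boldmath $\phi$}(\omega)$, and by the Fredholm alternative for the self-adjoint operator $L_-$ (whose kernel is spanned by $\mbox{\boldmath $\phi$}$) this equation is solvable if and only if $\langle \mbox{\boldmath $\phi$}(\omega),\partial_\omega\mbox{\boldmath $\phi$}(\omega)\rangle=0$. The paper closes the argument exactly here, computing from Lemma \ref{lemma-bifurcation} that $\langle \mbox{\boldmath $\phi$},\partial_\omega\mbox{\boldmath $\phi$}\rangle=\frac{1}{2}\frac{d}{d\omega}\|\mbox{\boldmath $\phi$}\|_{l^2}^2=\frac{\epsilon^{\frac{1}{p}-1}}{2p\|\mbox{\boldmath $\psi$}_0\|_{l^{2p+2}}^{2+\frac{2}{p}}}\left(1+{\cal O}(\epsilon)\right)\neq 0$. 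This Vakhitov--Kolokolov-type slope condition, which you mention only in passing and attach to the wrong operator ($L_+$), is what terminates the chain; without it the claim that the generalized kernel is spanned by the two stated vectors is unproved.

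Two smaller points. Your parenthetical that the essential spectrum of $L_\pm$ ``starts near $-\epsilon$'' is incorrect: the essential spectrum of $H$ is $[0,4]$, so that of $L_\pm$ is, modulo the exponentially decaying perturbation $\mathbf{W}$, the interval $[-\omega,4-\omega]$ with $-\omega\approx-\omega_0>0$ an $\epsilon$-independent constant; it is this ${\cal O}(1)$ gap that isolates the zero eigenvalue. Also, your Perron--Frobenius argument for $\dim\ker L_-=1$ is a legitimate alternative to the paper's implicit perturbation-theoretic simplicity (zero is a simple eigenvalue of $L_-$ because it bifurcates from the simple eigenvalue $\omega_0$ of $H$), but it relies on sign-definiteness of the discrete ground state $\mbox{\boldmath $\psi$}_0$, which you would need to justify or cite.
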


\begin{proof}
By Lemma 1 in \cite{PS}, operator $H$ has the essential spectrum
on $[0,4]$. Because of the exponential decay of ${\bf W}$ as $|n|
\to \infty$, the essential spectrum of $L_+$ and $L_-$ is shifted
by $-\omega \approx -\omega_0 > 0$, so that the zero point in the
spectrum of the linearized eigenvalue problem
(\ref{linearizedNLS}) is isolated from the continuous spectrum and
other isolated eigenvalues. The geometric kernel of the linearized
operator $L = {\rm diag}(L_+,L_-)$ is one-dimensional for
$\epsilon \in (0,\epsilon_0)$ since $L_- \mbox{\boldmath
$\phi$}(\omega) = {\bf 0}$ is nothing but the stationary DNLS
equation (\ref{stationaryDNLS}) whereas $L_+$ has an empty kernel
thanks to the perturbation theory and Lemma
\ref{lemma-bifurcation}. Indeed, for a small $\epsilon \in
(0,\epsilon_0)$, we have
$$
\langle \mbox{\boldmath $\psi$}_0, L_+ \mbox{\boldmath $\psi$}_0 \rangle =
2p \gamma \epsilon + {\cal O}(\epsilon^2) \neq 0.
$$
By the perturbation theory, a simple zero eigenvalue of $L_+$ for
$\epsilon = 0$ becomes a positive eigenvalue for $\epsilon > 0$
(if $\gamma = +1$). The second (generalized) eigenvector $(-
\partial_{\omega} \mbox{\boldmath $\phi$}(\omega),{\bf 0})$ is
found by direct computation thanks to Lemma
\ref{lemma-bifurcation}. It remains to show that the third
(generalized) eigenvector does not exist. If it does, it would
satisfy the equation
$$
L_- {\bf w}_0 = -\partial_{\omega} \mbox{\boldmath $\phi$}(\omega).
$$
However,
$$
\langle \mbox{\boldmath $\phi$}(\omega),\partial_{\omega}
\mbox{\boldmath $\phi$}(\omega) \rangle = \frac{1}{2} \frac{d}{d
\omega} \| \mbox{\boldmath $\phi$}(\omega) \|^2_{l^2}
= \frac{\epsilon^{\frac{1}{p}-1}}{2p \| \mbox{\boldmath $\psi$}_0\|^{2 + \frac{2}{p}}_{l^{2p+2}}}
\left( 1 + {\cal O}(\epsilon) \right) \neq 0
$$
for $\epsilon \in (0,\epsilon_0)$ by Lemma
\ref{lemma-bifurcation}. Therefore, no ${\bf w}_0 \in l^2$ exists.
\end{proof}

To determine the time evolution of varying parameters $(\omega,\theta)$
in the evolution equation (\ref{time-evolution-z}), we shall
add the condition that ${\bf z}(t)$ is symplectically orthogonal
to the two-dimensional null subspace of the linearized problem
(\ref{linearizedNLS}). To normalize the eigenvectors uniquely, we set
\begin{equation}
\label{eigenvectors-normalized} \mbox{\boldmath $\psi$}_1 =
\frac{\mbox{\boldmath $\phi$}(\omega)}{\|\mbox{\boldmath
$\phi$}(\omega)\|_{l^2}}, \quad \mbox{\boldmath $\psi$}_2 =
\frac{\partial_{\omega} \mbox{\boldmath $\phi$}(\omega)}{
\|\partial_{\omega} \mbox{\boldmath $\phi$}(\omega)\|_{l^2}}
\end{equation}
and require that
\begin{equation}
\label{constraints} \langle {\rm Re}{\bf z}(t),\mbox{\boldmath $\psi$}_1
\rangle =  \langle {\rm Im}{\bf z}(t),\mbox{\boldmath $\psi$}_2 \rangle =
0.
\end{equation}
By Lemma \ref{lemma-bifurcation}, both eigenvectors
$\mbox{\boldmath $\psi$}_1$ and $\mbox{\boldmath $\psi$}_2$ are
locally close to $\mbox{\boldmath $\psi$}_0$, the eigenvector of
$H$ for eigenvalue $\omega_0$, in any norm, e.g.
\begin{equation}
\| \mbox{\boldmath $\psi$}_1 - \mbox{\boldmath $\psi$}_0 \|_{l^2}
+ \| \mbox{\boldmath $\psi$}_2 - \mbox{\boldmath $\psi$}_0
\|_{l^2} \leq C \epsilon,
\end{equation}
for some $C > 0$. Although the vector field of the time evolution
problem (\ref{time-evolution-z}) does not lie in the orthogonal
complement of $\mbox{\boldmath $\psi$}_0$, that is in the
absolutely continuous spectrum of $H$, the difference is small for
small $\epsilon > 0$. We shall prove that the conditions
(\ref{constraints}) define a unique decomposition
(\ref{decomposition}).

\begin{lemma}[Decomposition]
\label{lemma-decomposition} Fix $\epsilon > 0$ and $\delta
> 0$ be sufficiently small. Assume that there exists $T = T(\epsilon,\delta)$ and $C_0 >
0$, such that ${\bf u}(t) \in C^1([0,T],l^2)$ satisfies
\begin{equation}
\label{u-bound}
\| {\bf u}(t) - \mbox{\boldmath $\phi$}(\omega_0 +
\epsilon))\|_{l^2} \leq C_0 \delta \epsilon^{\frac{1}{2p}},
\end{equation}
uniformly on $[0,T]$. There exists a unique choice of
$(\omega,\theta) \in C^1([0,T],\mathbb{R}^2)$ and ${\bf z}(t) \in
C^1([0,T],l^2)$ in the decomposition
(\ref{decomposition}) provided the constraints (\ref{constraints})
are met. Moreover, there exists $C
> 0$ such that
\begin{equation}
\label{theta-omega-bounds}
|\omega(t) - \omega_0 - \epsilon | \leq C \delta \epsilon, \quad | \theta(t)| \leq C
\delta, \quad \| {\bf z}(t) \|_{l^2} \leq C \delta \epsilon^{\frac{1}{2p}},
\end{equation}
uniformly on $[0,T]$.
\end{lemma}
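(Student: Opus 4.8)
The plan is a modulation argument: for each fixed $t$ I solve the two orthogonality constraints (\ref{constraints}) for $(\omega,\theta)$ by a quantitative implicit function / contraction argument, keeping the $\epsilon$-scaling of the bifurcating profile explicit throughout. Writing ${\bf z} = e^{i\theta}{\bf u}(t) - \mbox{\boldmath $\phi$}(\omega)$, I define $G = G(\omega,\theta;{\bf u}(t)) \in \mathbb{R}^2$ by
\[
G(\omega,\theta;{\bf u}) = \big( \langle {\rm Re}(e^{i\theta}{\bf u} - \mbox{\boldmath $\phi$}(\omega)),\, \mbox{\boldmath $\psi$}_1(\omega) \rangle,\ \langle {\rm Im}(e^{i\theta}{\bf u} - \mbox{\boldmath $\phi$}(\omega)),\, \mbox{\boldmath $\psi$}_2(\omega) \rangle \big),
\]
so that (\ref{constraints}) reads $G = {\bf 0}$ and the decomposition (\ref{decomposition}) is then recovered by setting ${\bf z}(t) := e^{i\theta(t)}{\bf u}(t) - \mbox{\boldmath $\phi$}(\omega(t))$. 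Since $\mbox{\boldmath $\phi$}\in C^2$ by Lemma \ref{lemma-bifurcation}, the normalized vectors $\mbox{\boldmath $\psi$}_1,\mbox{\boldmath $\psi$}_2$ of (\ref{eigenvectors-normalized}) depend $C^1$-smoothly on $\omega$, so $G$ is jointly $C^1$; at the reference point ${\bf u}=\mbox{\boldmath $\phi$}(\omega_0+\epsilon)$ we have $G(\omega_0+\epsilon,0;\mbox{\boldmath $\phi$}(\omega_0+\epsilon))={\bf 0}$ with ${\bf z}={\bf 0}$.

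Next I would compute the Jacobian $D_{(\omega,\theta)}G$ at this reference point. Because ${\bf z}={\bf 0}$ there, the terms in which a derivative falls on $\mbox{\boldmath $\psi$}_{1,2}$ drop out; and since $\mbox{\boldmath $\phi$}$ is real the matrix is diagonal — the off-diagonal entries vanish because $\partial_\theta{\rm Re}\,{\bf z}=-{\rm Im}\,\mbox{\boldmath $\phi$}={\bf 0}$ and $\partial_\omega{\rm Im}\,{\bf z}=-{\rm Im}\,\partial_\omega\mbox{\boldmath $\phi$}={\bf 0}$, while the diagonal entries are $-\|\mbox{\boldmath $\phi$}\|_{l^2}^{-1}\langle\partial_\omega\mbox{\boldmath $\phi$},\mbox{\boldmath $\phi$}\rangle$ and $\|\partial_\omega\mbox{\boldmath $\phi$}\|_{l^2}^{-1}\langle\mbox{\boldmath $\phi$},\partial_\omega\mbox{\boldmath $\phi$}\rangle$. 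Both are nonzero because $\langle\mbox{\boldmath $\phi$},\partial_\omega\mbox{\boldmath $\phi$}\rangle=\tfrac12\tfrac{d}{d\omega}\|\mbox{\boldmath $\phi$}\|^2_{l^2}\neq 0$, exactly as recorded in the proof of the double-null-subspace lemma above. Feeding in the asymptotics of Lemma \ref{lemma-bifurcation} ($\|\mbox{\boldmath $\phi$}\|_{l^2}$ of order $\epsilon^{1/(2p)}$, $\|\partial_\omega\mbox{\boldmath $\phi$}\|_{l^2}$ of order $\epsilon^{1/(2p)-1}$, $\langle\mbox{\boldmath $\phi$},\partial_\omega\mbox{\boldmath $\phi$}\rangle$ of order $\epsilon^{1/p-1}$), the diagonal entries are of order $\epsilon^{1/(2p)-1}$ and $\epsilon^{1/(2p)}$, so $D_{(\omega,\theta)}G$ is invertible, with the two entries of its inverse of order $\epsilon^{1-1/(2p)}$ and $\epsilon^{-1/(2p)}$ respectively.

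Then I would run the Newton/contraction map $(\omega,\theta)\mapsto(\omega,\theta)-A\,G(\omega,\theta;{\bf u}(t))$, with $A:=[D_{(\omega,\theta)}G(\omega_0+\epsilon,0;\mbox{\boldmath $\phi$}(\omega_0+\epsilon))]^{-1}$, on a ball about $(\omega_0+\epsilon,0)$. The source term $G(\omega_0+\epsilon,0;{\bf u}(t))=\big(\langle{\rm Re}({\bf u}(t)-\mbox{\boldmath $\phi$}(\omega_0+\epsilon)),\mbox{\boldmath $\psi$}_1\rangle,\langle{\rm Im}({\bf u}(t)-\mbox{\boldmath $\phi$}(\omega_0+\epsilon)),\mbox{\boldmath $\psi$}_2\rangle\big)$ has norm at most $\|{\bf u}(t)-\mbox{\boldmath $\phi$}(\omega_0+\epsilon)\|_{l^2}\le C_0\delta\epsilon^{1/(2p)}$ by (\ref{u-bound}) and $\|\mbox{\boldmath $\psi$}_i\|_{l^2}=1$, and the quadratic remainder is controlled by the $C^2$-bounds of Lemma \ref{lemma-bifurcation} and carries an extra factor of the ball radius, so the map contracts once $\delta$ is small. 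Applying $A$ to the source and tracking its two entries separately yields the first-iterate, hence final, bounds $|\omega(t)-\omega_0-\epsilon|$ of order $\epsilon^{1-1/(2p)}\cdot\delta\epsilon^{1/(2p)}=\delta\epsilon$ and $|\theta(t)|$ of order $\epsilon^{-1/(2p)}\cdot\delta\epsilon^{1/(2p)}=\delta$, with $(\omega(t),\theta(t))$ unique in the ball; these are the first two estimates of (\ref{theta-omega-bounds}). The remaining bound follows from the triangle inequality
\[
\|{\bf z}(t)\|_{l^2}\le|e^{i\theta(t)}-1|\,\|{\bf u}(t)\|_{l^2}+\|{\bf u}(t)-\mbox{\boldmath $\phi$}(\omega_0+\epsilon)\|_{l^2}+\|\mbox{\boldmath $\phi$}(\omega_0+\epsilon)-\mbox{\boldmath $\phi$}(\omega(t))\|_{l^2},
\]
where each term is of order $\delta\epsilon^{1/(2p)}$ using $|\theta(t)|\lesssim\delta$, $\|{\bf u}(t)\|_{l^2}\lesssim\epsilon^{1/(2p)}$ (Lemma \ref{lemma-bifurcation} and (\ref{u-bound})), $|\omega(t)-\omega_0-\epsilon|\lesssim\delta\epsilon$, and $\|\partial_\omega\mbox{\boldmath $\phi$}\|_{l^2}\lesssim\epsilon^{1/(2p)-1}$. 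Finally, because ${\bf u}\in C^1([0,T],l^2)$ and $G$ is $C^1$, the implicit solution map is $C^1$, so $(\omega,\theta)\in C^1([0,T],\mathbb{R}^2)$ and ${\bf z}\in C^1([0,T],l^2)$.

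The step I expect to be the main obstacle is controlling the degeneracy of the Jacobian as $\epsilon\to0$: $D_{(\omega,\theta)}G$ is invertible but not uniformly so, so a direct appeal to the implicit function theorem would give only an $\epsilon$-dependent smallness threshold on $\delta$ and non-sharp constants. The clean remedy is to rescale first: by Lemma \ref{lemma-bifurcation} write $\mbox{\boldmath $\phi$}(\omega_0+\epsilon)=\epsilon^{1/(2p)}\mbox{\boldmath $\Phi$}(\epsilon)$ with $\mbox{\boldmath $\Phi$}(\epsilon)$ and $\epsilon\,\partial_\epsilon\mbox{\boldmath $\Phi$}(\epsilon)$ of order $1$, and set $\omega-\omega_0-\epsilon=\epsilon\mu$, ${\bf z}=\epsilon^{1/(2p)}{\bf Z}$; in the rescaled variables the constraint map has an order-one, uniformly invertible Jacobian, the contraction runs with an $\epsilon$-independent threshold on $\delta$, and undoing the scalings returns precisely the bounds in (\ref{theta-omega-bounds}).
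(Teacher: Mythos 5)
Your proposal is correct and follows essentially the same route as the paper: both reduce the constraints (\ref{constraints}) to a two-dimensional root-finding problem for $(\omega,\theta)$, exploit the diagonal Jacobian with the anisotropic scalings $\epsilon^{\frac{1}{2p}-1}$ and $\epsilon^{\frac{1}{2p}}$ coming from Lemma \ref{lemma-bifurcation}, invoke a quantitative implicit-function/contraction argument to get $|\omega-\omega_0-\epsilon|\leq C\delta\epsilon$, $|\theta|\leq C\delta$, and then obtain the $\|{\bf z}\|_{l^2}$ bound from the representation ${\bf z}=e^{i\theta}{\bf u}-\mbox{\boldmath $\phi$}(\omega)$ and the triangle inequality. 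The only cosmetic difference is that the paper evaluates the Jacobian at the actual ${\bf u}(t)$, splitting it as ${\bf D}_1+{\bf D}_2$ with ${\bf D}_2$ of relative size $\delta$, whereas you linearize at ${\bf u}=\mbox{\boldmath $\phi$}(\omega_0+\epsilon)$ and absorb that correction into the contraction (and optionally the rescaling), which is equivalent.
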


\begin{proof}
We write the decomposition (\ref{decomposition}) in the form
\begin{equation}
\label{z-representation}
{\bf z} = e^{i \theta} \left({\bf u} - \mbox{\boldmath $\phi$}(\omega_0+\epsilon)\right) +
\left( e^{i \theta} \mbox{\boldmath $\phi$}(\omega_0+\epsilon) -
\mbox{\boldmath $\phi$}(\omega) \right).
\end{equation}
First, we show that the constraints (\ref{constraints}) give
unique values of $(\omega,\theta)$ satisfying bounds
(\ref{theta-omega-bounds}) uniformly in $[0,T]$ provided the bound
(\ref{u-bound}) holds. To do so, we rewrite (\ref{constraints})
and (\ref{z-representation}) as a fixed-point problem ${\bf
F}(\omega,\theta) = {\bf 0}$, where ${\bf F} : \mathbb{R}^2
\mapsto \mathbb{R}^2$ is given by
$$
{\bf F}(\omega,\theta) = \left[ \begin{array}{c} \langle {\rm Re}
({\bf u} - \mbox{\boldmath $\phi$}^{(0)}) e^{i \theta},\mbox{\boldmath
$\psi$}_1 \rangle + \langle \mbox{\boldmath $\phi$}^{(0)} \cos \theta
- \mbox{\boldmath $\phi$}(\omega),\mbox{\boldmath $\psi$}_1 \rangle \\
\langle {\rm Im} ({\bf u} - \mbox{\boldmath $\phi$}^{(0)}) e^{i
\theta},\mbox{\boldmath $\psi$}_2 \rangle + \langle
\mbox{\boldmath $\phi$}^{(0)} \sin \theta, \mbox{\boldmath $\psi$}_2
\rangle \end{array} \right],
$$
where $\mbox{\boldmath $\phi$}^{(0)} := \mbox{\boldmath
$\phi$}(\omega_0 + \epsilon)$. We note that ${\bf F}$ is $C^1$ in
$(\theta,\omega)$ thanks to Lemma \ref{lemma-bifurcation}. Direct
computations give the vector field
$$
{\bf F}(\omega_0+\epsilon,0) = \left[ \begin{array}{c} \langle
{\rm Re} ({\bf u} - \mbox{\boldmath $\phi$}^{(0)}),\mbox{\boldmath $\psi$}^{(0)}_1 \rangle \\
\langle {\rm Im} ({\bf u} - \mbox{\boldmath
$\phi$}^{(0)}),\mbox{\boldmath $\psi$}^{(0)}_2 \rangle
\end{array} \right]
$$
and the Jacobian $D {\bf F}(\omega_0+\epsilon,0) = {\bf D}_1 +
{\bf D}_2$ with
\begin{eqnarray*}
{\bf D}_1 & = & \left[ \begin{array}{cc} - \langle
\partial_{\omega} \mbox{\boldmath $\phi$}^{(0)},\mbox{\boldmath
$\psi$}_1^{(0)} \rangle & 0
\\ 0 & \langle \mbox{\boldmath $\phi$}^{(0)},
\mbox{\boldmath $\psi$}^{(0)}_2 \rangle \end{array} \right], \\
{\bf D}_2 & = & \left[ \begin{array}{cc} \langle {\rm Re} ({\bf u}
- \mbox{\boldmath $\phi$}^{(0)}), \partial_{\omega}
\mbox{\boldmath $\psi$}_1^{(0)} \rangle & - \langle {\rm Im} ({\bf
u} - \mbox{\boldmath
$\phi$}^{(0)}), \mbox{\boldmath $\psi$}^{(0)}_1 \rangle \\
\langle {\rm Im} ({\bf u} - \mbox{\boldmath
$\phi$}^{(0)}), \partial_{\omega} \mbox{\boldmath $\psi$}^{(0)}_2 \rangle &
\langle {\rm Re} ({\bf u} - \mbox{\boldmath $\phi$}^{(0)}),\mbox{\boldmath $\psi$}^{(0)}_2 \rangle
\end{array} \right],
\end{eqnarray*}
where $\mbox{\boldmath $\psi$}^{(0)}_{1,2} = \mbox{\boldmath
$\psi$}_{1,2} |_{\omega = \omega_0 + \epsilon}$ and
$\partial_{\omega} \mbox{\boldmath $\psi$}^{(0)}_{1,2} =
\partial_{\omega} \mbox{\boldmath $\psi$}_{1,2} |_{\omega =
\omega_0 + \epsilon}$. Thanks to the bound (\ref{u-bound}) and the
normalization of $\mbox{\boldmath $\psi$}_{1,2}$, there exists an
$(\epsilon,\delta)$-independent constant $C_0 > 0$ such that
$$
\| {\bf F}(\omega_0+\epsilon,0) \| \leq C_0 \delta
\epsilon^{\frac{1}{2p}}.
$$
On the other hand, $D {\bf F}(\omega_0+\epsilon,0)$ is invertible
for small $\epsilon > 0$ since
$$
|({\bf D}_1)_{11}| \geq C_1 \epsilon^{\frac{1}{2p}-1}, \quad |({\bf
D}_1)_{22}| \geq C_2 \epsilon^{\frac{1}{2p}}
$$
and
$$
|({\bf D}_2)_{11}| + |({\bf D}_2)_{21}| \leq C_3 \delta \epsilon^{\frac{1}{2p} - 1}, \quad
|({\bf D}_2)_{12}| + |({\bf D}_2)_{22}| \leq C_4 \delta \epsilon^{\frac{1}{2p}},
$$
for some ($\epsilon$,$\delta$)-independent constants $C_1,C_2,C_3,C_4 > 0$. By the Implicit
Function Theorem, there exists a unique root of ${\bf
F}(\omega,\theta) = {\bf 0}$ near $(\omega_0+\epsilon,0)$ for any
${\bf u}(t)$ satisfying (\ref{u-bound}) such that
$$
|\omega(t) - \omega_0 - \epsilon | \leq C \delta \epsilon, \quad | \theta(t)| \leq C
\delta,
$$
for some $C > 0$. Moreover, if ${\bf u}(t) \in C^1([0,T],l^2)$, then
$(\omega,\theta) \in C^1([0,T],\mathbb{R}^2)$. Finally, existence of a unique
${\bf z}(t)$ and the bound $\| {\bf z}(t) \|_{l^2} \leq C \delta \epsilon^{\frac{1}{2p}}$
follow from the representation (\ref{z-representation}) and the triangle inequality.
\end{proof}

Assuming $(\omega,\theta) \in C^1([0,T],\mathbb{R}^2)$ at least
locally in time and using Lemma \ref{lemma-decomposition}, we
define the time evolution of $(\omega,\theta)$ from the
projections of the time evolution equation
(\ref{time-evolution-z}) with the symplectic orthogonality conditions (\ref{constraints}).
The resulting system is written in the matrix--vector form
\begin{equation} \label{3}
{\bf A}(\omega,{\bf z}) \left[ \begin{array}{cc} \dot{\omega} \\
\dot{\theta} - \omega \end{array} \right] = {\bf f}(\omega,{\bf
z}),
\end{equation}
where
$$
{\bf A}(\omega,{\bf z}) =
\left[ \begin{array}{ccc} \langle \partial_{\omega}
\mbox{\boldmath $\phi$}(\omega),\mbox{\boldmath
$\psi$}_1 \rangle - \langle {\rm Re} {\bf z},\partial_{\omega} \mbox{\boldmath
$\psi$}_1 \rangle & \langle {\rm Im} {\bf z},\mbox{\boldmath
$\psi$}_1 \rangle \\
\langle {\rm Im} {\bf z}, \partial_{\omega} \mbox{\boldmath
$\psi$}_2 \rangle & \langle \mbox{\boldmath $\phi$}(\omega) + {\rm Re} {\bf z},
\mbox{\boldmath $\psi$}_2 \rangle \end{array} \right]
$$
and
$$
{\bf f}(\omega,{\bf z}) =  \left[ \begin{array}{l} \langle {\rm
Im} {\bf N}(\mbox{\boldmath $\phi$}+{\bf z})- {\bf W} {\bf z},
\mbox{\boldmath $\psi$}_1 \rangle \\
\langle {\rm Re} {\bf N}(\mbox{\boldmath $\phi$}+{\bf z}) - {\bf
N}(\mbox{\boldmath $\phi$})-(2p+1) {\bf W} {\bf z},
\mbox{\boldmath $\psi$}_2 \rangle
\end{array} \right].
$$

Using an elementary property for power functions
$$
||a+b|^{2p}(a+b)-|a|^{2p}a|\leq C_p (|a|^{2p}|b|+|b|^{2p+1}),
$$
for some $C_p > 0$, where $a,b \in \mathbb{C}$ are arbitrary, we
bound the vector fields of (\ref{time-evolution-z}) and (\ref{3})
by
\begin{eqnarray}
\label{estimate-N}
\|  {\bf N}(\mbox{\boldmath $\phi$}(\omega)+{\bf z}) - {\bf
N}(\mbox{\boldmath $\phi$}(\omega) \|_{l^2} & \leq & C \left( \|
|\mbox{\boldmath $\phi$}(\omega)|^{2p} |{\bf z}| \|_{l^2} +
\| {\bf z} \|_{l^2}^{2p+1} \right), \\
\label{estimate-f} \| {\bf f}(\omega,{\bf z}) \| & \leq & C
\sum_{j=1}^2 \left( \| |\mbox{\boldmath $\phi$}(\omega)|^{2p-1}
|\mbox{\boldmath $\psi$}_j| |{\bf z}|^2 \|_{l^1} + \|
|\mbox{\boldmath $\psi$}_j| |{\bf z}|^{2p+1} \|_{l^1} \right),
\end{eqnarray}
for some $C > 0$, where the pointwise multiplication of vectors on
$\mathbb{Z}$ is understood in the sense
$$
(|\mbox{\boldmath $\phi$}| |\mbox{\boldmath $\psi$}|)_n = \phi_n
\psi_n.
$$
By Lemmas \ref{lemma-bifurcation} and \ref{lemma-decomposition},
${\bf A}(\omega,{\bf z})$ is invertible for a small ${\bf z} \in
l^2$ and a small $\epsilon \in (0,\epsilon_0)$ so that solutions
of system (\ref{3}) satisfy the estimates
\begin{eqnarray}
\label{33} |\dot{\omega}| & \leq & C \epsilon^{2-\frac{1}{p}}
\left( \| |\mbox{\boldmath $\psi$}_1| |{\bf z}|^2 \|_{l^1} + \|
|\mbox{\boldmath $\psi$}_2| |{\bf z}|^2 \|_{l^1} \right), \\
\label{33a} |\dot{\theta}-\omega| & \leq & C
\epsilon^{1-\frac{1}{p}} \left( \| |\mbox{\boldmath $\psi$}_1|
|{\bf z}|^2 \|_{l^1} + \| |\mbox{\boldmath $\psi$}_2| |{\bf z}|^2
\|_{l^1} \right),
\end{eqnarray}
for some $C > 0$ uniformly in $\| {\bf z} \|_{l^2} \leq C_0
\epsilon^{\frac{1}{2p}}$ for some $C_0 > 0$.

\begin{remark}
{\rm The estimates (\ref{33}) and (\ref{33a}) show that if $\|
{\bf z} \|_{l^2} \leq C \delta \epsilon^{\frac{1}{2p}}$ for some
$C > 0$, then
$$
|\omega(t) - \omega(0)| \leq C \delta^2 \epsilon^2, \quad \left|
\theta(t) - \int_0^t \omega(t') dt' \right| \leq C \delta^2 \epsilon,
$$
uniformly on $[0,T]$ for any fixed $T > 0$. These bounds are smaller than
bounds (\ref{theta-omega-bounds}) of Lemma \ref{lemma-decomposition}. They
become comparable with bounds (\ref{theta-omega-bounds}) for larger
time intervals $[0,T]$, where $T \leq \frac{C_0}{\delta \epsilon}$ for some $C_0 > 0$. Our
main task is to extend these bounds globally to $T =
\infty$.}
\end{remark}

By the theorem on orbital stability in \cite{weinstein}, the
trajectory of the DNLS equation (\ref{dNLS}) originating from a
point in a local neighborhood of the stationary solution
$\mbox{\boldmath $\phi$}(\omega(0))$ remains in a local
neighborhood of the stationary solution $\mbox{\boldmath
$\phi$}(\omega(t))$ for all $t \in \mathbb{R}_+$. By a definition
of orbital stability, for any $\mu_0 > 0$ there exists a $\nu_0 >
0$ such that if $|\omega(0) - \omega_0| \leq \nu_0$ then
$|\omega(t) - \omega_0| \leq \mu_0$ uniformly on $t \in
\mathbb{R}_+$. Therefore, there exists a $\delta(\epsilon)$ for
each $\epsilon \in (0,\epsilon_0)$ such that $T(\epsilon,\delta) =
\infty$ for any $\delta \in (0,\delta(\epsilon))$ in Lemma
\ref{lemma-decomposition}. To prove the main result on asymptotic
stability, we need to show that the trajectory approaches to the
stationary solution $\mbox{\boldmath $\phi$}(\omega_{\infty})$ for
some $\omega_{\infty} \in (\omega_0,\omega_0 + \epsilon_0)$. Our
main result is formulated as follows.

\begin{theorem}[Asymptotic stability in the energy space]
\label{theorem-main} Assume (V1)--(V3), fix $\gamma = +1$ and $p
\geq 3$. Fix $\epsilon
> 0$ and $\delta > 0$ be sufficiently small and assume that
$\theta(0) = 0$, $\omega(0) = \omega_0 + \epsilon$,
and
$$
\| {\bf u}(0) - \mbox{\boldmath $\phi$}(\omega_0 + \epsilon) \|_{l^2} \leq
C_0 \delta \epsilon^{\frac{1}{2p}}
$$
for some $C_0 > 0$. Then,
there exist $\omega_{\infty} \in (\omega_0,\omega_0 +
\epsilon_0)$, $(\omega,\theta) \in
C^1(\mathbb{R}_+,\mathbb{R}^2)$, and a solution ${\bf u}(t) \in X:=
C^1(\mathbb{R}_+,l^2)\cap L^6(\mathbb{R}_+,l^\infty)$ to the DNLS equation (\ref{dNLS})
such that
$$
\lim_{t \to \infty} \omega(t) = \omega_{\infty}, \quad 
\| {\bf u}(t) - e^{-i\theta(t)} \mbox{\boldmath
$\phi$}(\omega(t)) \|_{X} \leq C\de\ve^{1/(2p)}.
$$
\end{theorem}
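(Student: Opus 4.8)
The plan is to run a continuity (bootstrap) argument on the radiation field $\mbox{\boldmath $z$}$ from the decomposition (\ref{decomposition}), fed by the dispersive estimates of Section 3. Since $e^{-i\theta(t)}$ is a unitary phase on each $l^q(\mathbb{Z})$, one has $\|{\bf u}(t)-e^{-i\theta(t)}\mbox{\boldmath $\phi$}(\omega(t))\|_X=\|{\bf z}\|_X$, and because $H$ is bounded on $l^2$ equation (\ref{time-evolution-z}) together with (\ref{33})--(\ref{33a}) gives $\|\dot{\bf z}(t)\|_{l^2}\lesssim\|{\bf z}(t)\|_{l^2}$; hence membership of ${\bf z}$ in $X$ reduces to a uniform bound $\|{\bf z}\|_{L^\infty_t l^2}\lesssim\delta\epsilon^{1/(2p)}$ (qualitatively this is the orbital stability already invoked above, but we need the sharp rate) together with the Strichartz bound $\|{\bf z}\|_{L^6_t l^\infty}\lesssim\delta\epsilon^{1/(2p)}$. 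On an interval $[0,T]$ I therefore introduce $M_\infty=\epsilon^{-1/(2p)}\|{\bf z}\|_{L^\infty([0,T],l^2)}$, $M_6=\epsilon^{-1/(2p)}\|{\bf z}\|_{L^6([0,T],l^\infty)}$, and a weighted local–smoothing quantity $M_{\mathrm{loc}}=\epsilon^{-1/(2p)}\|\langle n\rangle^{-\sigma}{\bf z}\|_{L^2([0,T],l^2)}$, assume $M_\infty+M_6+M_{\mathrm{loc}}\le K\delta$ for a suitable $K$, and show the estimates self-improve this bound once $\epsilon,\delta$ are small. As these quantities are continuous and finite in $T$ (Lemma \ref{lemma-decomposition} and orbital stability give $T=\infty$ with a crude bound), the argument then closes on $[0,\infty)$.

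To run the estimates I would project (\ref{time-evolution-z}) onto the continuous spectral subspace of $H$: with $P_c=I-\langle\cdot,\mbox{\boldmath $\psi$}_0\rangle\mbox{\boldmath $\psi$}_0$ and $\mbox{\boldmath $\eta$}=P_c{\bf z}$, the orthogonality constraints (\ref{constraints}) — together with the fact, obtained by differentiating the expansion of Lemma \ref{lemma-bifurcation}, that $\mbox{\boldmath $\psi$}_1-\mbox{\boldmath $\psi$}_0$ and $\mbox{\boldmath $\psi$}_2-\mbox{\boldmath $\psi$}_0$ equal $\epsilon$ times fixed exponentially localized vectors up to $O(\epsilon^2)$ — give $|\langle{\bf z},\mbox{\boldmath $\psi$}_0\rangle|\lesssim\epsilon\|\langle n\rangle^{-\sigma}{\bf z}\|_{l^2}+\epsilon^2\|{\bf z}\|_{l^2}$, so the discrete component ${\bf z}-\mbox{\boldmath $\eta$}$ is negligible in all three norms and it suffices to control $\mbox{\boldmath $\eta$}$. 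Using the integrating factor $e^{i\int_0^t\omega\,ds}$ (the slow variation of $\omega$ is harmless by (\ref{33}) and a scalar phase does not affect spatial $l^q$-norms), Duhamel's formula writes $\mbox{\boldmath $\eta$}(t)=e^{-iHt}\mbox{\boldmath $\eta$}(0)$ plus retarded integrals $\int_0^t e^{-iH(t-s)}P_c(\cdots)\,ds$ of: (i) the modulation terms $(\dot\theta-\omega)P_c(\mbox{\boldmath $\phi$}+{\bf z})$ and $\dot\omega\,P_c\partial_\omega\mbox{\boldmath $\phi$}$; (ii) a localized term of size $O(|\mbox{\boldmath $\phi$}|^{2p}|{\bf z}|)$; and (iii) the homogeneous part of size $O(|{\bf z}|^{2p+1})$, the splitting $|{\bf N}(\mbox{\boldmath $\phi$}+{\bf z})-{\bf N}(\mbox{\boldmath $\phi$})|\le C_p(|\mbox{\boldmath $\phi$}|^{2p}|{\bf z}|+|{\bf z}|^{2p+1})$ being the pointwise form of (\ref{estimate-N}).

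To the initial term I apply the Strichartz and $L^\infty_t l^2$ bounds of Section 3 to get $\lesssim\|{\bf z}(0)\|_{l^2}\lesssim\delta\epsilon^{1/(2p)}$ in every norm. To (i)--(ii) I apply the double-weighted local-smoothing estimate $\|\langle n\rangle^{-\sigma}\int_0^t e^{-iH(t-s)}P_c{\bf F}\,ds\|_{L^2_t l^2\cap L^\infty_t l^2}\lesssim\|\langle n\rangle^{\sigma}{\bf F}\|_{L^2_t l^2}$ for the $M_\infty,M_{\mathrm{loc}}$ contributions, and the mixed retarded estimate $\|\int_0^t e^{-iH(t-s)}P_c{\bf F}\,ds\|_{L^6_t l^\infty}\lesssim\|\langle n\rangle^{\sigma}{\bf F}\|_{L^2_t l^2}$ (local-smoothing input, Strichartz output, legitimate after a Christ--Kiselev truncation since the time exponents $6$ and $2$ are off-diagonal) for the $M_6$ contribution; both follow by composing the Strichartz bound with the dual of weighted local smoothing. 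Since $|\mbox{\boldmath $\phi$}|^{2p}=O(\epsilon)$ and is exponentially localized, since $\|\langle n\rangle^{\sigma}\mbox{\boldmath $\phi$}\|$ and $\|\langle n\rangle^{\sigma}\partial_\omega\mbox{\boldmath $\phi$}\|$ scale like $\epsilon^{1/(2p)}$ and $\epsilon^{1/(2p)-1}$, and owing to the prefactors $\epsilon^{2-1/p},\epsilon^{1-1/p}$ in (\ref{33})--(\ref{33a}), term (ii) produces only a contribution $\lesssim\epsilon\,M_{\mathrm{loc}}$ (absorbed into the left side) and term (i), after interpolating $\|\langle n\rangle^{-\sigma}{\bf z}\|$ between $L^2_t$ and $L^\infty_t$, a contribution $O(\delta^2\epsilon^{1+1/(2p)})$; both are harmless.

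The one term that is neither localized nor small is (iii), and this is where $p\ge3$ is forced: the discrete Schr\"odinger group on $\mathbb{Z}$ disperses only like $|t|^{-1/3}$, so its sole admissible endpoint Strichartz pair is $(6,\infty)$, with dual $(6/5,1)$. Bounding $\||{\bf z}|^{2p+1}\|_{l^1}=\|{\bf z}\|_{l^{2p+1}}^{2p+1}\le\|{\bf z}\|_{l^\infty}^{2p-1}\|{\bf z}\|_{l^2}^{2}$ and feeding this into the inhomogeneous Strichartz estimate controls its Duhamel contribution in $L^\infty_t l^2\cap L^6_t l^\infty$ by $\big\|\,\|{\bf z}\|_{l^\infty}^{2p-1}\big\|_{L^{6/5}_t}\,\|{\bf z}\|_{L^\infty_t l^2}^2=\|{\bf z}\|_{L^{6(2p-1)/5}_t l^\infty}^{2p-1}\,\|{\bf z}\|_{L^\infty_t l^2}^2$; since $6(2p-1)/5\ge 6$ exactly when $p\ge3$ (with equality at $p=3$), the time exponent is admissible and one interpolates between the Strichartz norm and $\|{\bf z}\|_{L^\infty_t l^\infty}\le\|{\bf z}\|_{L^\infty_t l^2}$, obtaining a contribution $O((\delta\epsilon^{1/(2p)})^{2p+1})$. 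Summing, $M_\infty+M_6+M_{\mathrm{loc}}\le K_1\delta+C\epsilon(M_\infty+M_6+M_{\mathrm{loc}})+C'(\delta\epsilon^{\kappa}+\delta^{2p})$ for some $\kappa>0$ and $C$ independent of the bootstrap, which closes for $\epsilon,\delta$ small and yields $\|{\bf z}\|_{L^\infty_t l^2}+\|{\bf z}\|_{L^6_t l^\infty}\le C\delta\epsilon^{1/(2p)}$, the asserted $X$-bound. Finally $M_{\mathrm{loc}}<\infty$ together with (\ref{33}) gives $\dot\omega\in L^1(\mathbb{R}_+)$, hence $\omega(t)\to\omega_\infty$ with $|\omega_\infty-\omega_0-\epsilon|\lesssim\delta^2\epsilon^2<\epsilon_0$, so $\omega_\infty\in(\omega_0,\omega_0+\epsilon_0)$; likewise $\dot\theta-\omega\in L^1$ by (\ref{33a}), and $(\omega,\theta)\in C^1(\mathbb{R}_+,\mathbb{R}^2)$ by Lemma \ref{lemma-decomposition}. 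I expect the principal difficulty to be assembling and combining the two scales of linear estimates — threading the localized terms through the weighted smoothing bounds while routing their Strichartz contribution through the mixed retarded estimate — and the sharp bookkeeping that makes term (iii) close precisely at the critical power $p=3$.
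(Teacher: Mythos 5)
Your proposal follows essentially the same route as the paper's Section 4: the modulation decomposition with the symplectic orthogonality constraints, projection onto the absolutely continuous spectrum of $H$, and a bootstrap that combines the Strichartz estimates of Corollary \ref{corollary-Strichartz} with the weighted smoothing bounds of Lemma \ref{le:01}, using $\dot\omega,\dot\theta-\omega\in L^1_t$ to handle the modulation terms and closing the critical term $|{\bf z}|^{2p+1}$ exactly at $p\geq 3$ through the endpoint pair $(6,\infty)$. The only deviations are bookkeeping choices (a weighted $L^2_t l^2_n$ smoothing norm in place of the paper's $l^\infty_n L^2_t$ norm $M_3$, the discrete component $a$ absorbed via the constraints rather than tracked by $M_4,M_5$, and the critical nonlinearity routed through the dual pair $L^{6/5}_t l^1_n$ instead of $L^1_t l^2_n$ with the same Gagliardo--Nirenberg interpolation), which yield the same threshold and conclusion.
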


Theorem \ref{theorem-main} is proved in Section 4. To bound
solutions of the time-evolution problem (\ref{time-evolution-z})
in the space $X$ (intersected with some other spaces of technical
nature), we need some linear estimates, which are described in
Section 3.

\section{Linear estimates}

We need several types of linear estimates, each is designed to
control different nonlinear terms of the vector field of the
evolution equation (\ref{time-evolution-z}). For notational
convenience, we shall use $L^p_t$ and $l^q_n$ to denote $L^p$
space on $t \in [0,T]$ and $l^q$ space on $n \in \mathbb{Z}$,
where $T > 0$ is an arbitrary time including $T = \infty$. The notation
$<n> = (1 + n^2)^{1/2}$ is used for the weights in $l^q_n$ norms.
The constant $C > 0$ is a generic constant, which may change from
one line to another line.

\subsection{Decay  and Strichartz estimates}

Under assumptions (V1)--(V2) on the potential, the following result was proved in
\cite{PS}.

\begin{lemma}[Dispersive decay estimates]
\label{lemma-dispersive} Fix $\sigma > \frac{5}{2}$ and assume
(V1)--(V2). There exists a constant $C > 0$ depending on ${\bf V}$
such that
\begin{eqnarray}
\label{eq:15} \left\| \langle n \rangle^{-\si} e^{-i t
H}P_{a.c.}(H) {\bf f} \right\|_{l^2_n}
& \leq & C (1+t)^{-3/2} \| \langle n \rangle^{\si} {\bf f} \|_{l^2_n},  \\
\label{eq:16} \left\| e^{-i t H}P_{a.c.}(H) {\bf f}
\right\|_{l^\infty_n} & \leq & C (1+t)^{-1/3} \| {\bf f}
\|_{l^1_n},
\end{eqnarray}
for all $t \in \mathbb{R}_+$, where $P_{a.c.}(H)$ is the
projection to the absolutely continuous spectrum of
$H$.
\end{lemma}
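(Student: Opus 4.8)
The statement is precisely the pair of dispersive bounds established in \cite{PS}; the plan is to recall how those are proved. Write $H_0 = -\Delta$ for the free discrete Laplacian, with Fourier symbol $\om(\theta) = 2 - 2\cos\theta$ on $\tone$, so that $\si(H_0) = \si_{a.c.}(H_0) = [0,4]$, and set $R(z) = (H-z)^{-1}$, $R_0(z) = (H_0-z)^{-1}$. Everything is read off from the Stone representation
$$
e^{-itH} P_{a.c.}(H) = \frac{1}{2\pi i} \int_0^4 e^{-it\la} \big[ R(\la+i0) - R(\la-i0) \big] \, d\la,
$$
combined with the resolvent identity $R = R_0 - R_0 {\bf V} R$ — equivalently the Born series $R = \sum_{k\ge 0} (-1)^k R_0({\bf V} R_0)^k$ — in which the free resolvent kernel $R_0(z;n,m)$ is explicit and analytic in $z \notin [0,4]$.

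The first step is a limiting absorption principle: under (V1), for $\la$ in compact subsets of the open band $(0,4)$ the boundary values $R(\la\pm i0)$ exist as bounded maps $l^2_\si \to l^2_{-\si}$, depend real-analytically on $\la$ there, and at the free level the propagator has the explicit Bessel kernel $[e^{-itH_0}]_{n,m} = e^{-2it} i^{n-m} J_{n-m}(2t)$. The genericity hypothesis (V2) is used at the two band edges: after the gauge transformation $f_n \mapsto (-1)^n f_n$, which conjugates $H_0$ to $4 - H_0$ and hence interchanges $\la=0$ and $\la=4$, it guarantees that $\la = 0$ is neither an eigenvalue nor a resonance of $H$, so the $\la^{-1/2}$ singularity carried by the free resolvent at the edge does not survive in $R$; consequently $\langle n\rangle^{-\si} R(\la\pm i0)\langle m\rangle^{-\si}$ has, near $\la = 0$, a convergent expansion $a_0(\la) + \sqrt{\la}\,a_1(\la) + \la\, a_2(\la) + \dots$ with bounded, sufficiently regular coefficients. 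This is where the decay budget ${\bf V}\in l^1_{2\si}$ with $\si > \tfrac{5}{2}$ is spent.

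For the weighted estimate \eqref{eq:15} I would feed this expansion into the Stone formula, cut the $\la$-integral with a smooth partition of unity $\chi_{\mathrm{int}} + \chi_0 + \chi_4 = 1$ separating the interior from the two edges, and integrate by parts in $\la$. On the interior piece the integrand is smooth and compactly supported in $(0,4)$, so integrating by parts as many times as the decay of ${\bf V}$ permits — at least twice, which suffices — yields $O(t^{-3/2})$. On each edge piece the smooth part $a_0$ again integrates by parts to rapid decay (boundary terms killed by the cutoff), whereas the genuinely singular leading contribution reduces to $\int_0^\infty e^{-it\la}\sqrt{\la}\,\chi_0(\la)\,d\la = O(t^{-3/2})$ — this is the term that pins down the exponent — and the higher terms of the expansion are more regular, hence decay at least as fast. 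Summing the three pieces gives \eqref{eq:15}.

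The estimate \eqref{eq:16} is the harder one, since it must be uniform in $n,m\in\mathbb{Z}$ with no weights to spare, and it is what forces the slower rate $t^{-1/3}$. Inserting the Born series into the Stone formula, the leading term is $e^{-itH_0}$ with the Bessel kernel above; the phase $\theta\mapsto (n-m)\theta + 2t\cos\theta$ has second derivative $-2t\cos\theta$, which vanishes at $\theta = \pm\pi/2$ while the third derivative does not, so van der Corput's lemma of order three yields $\sup_k |J_k(2t)|\le C t^{-1/3}$ (and the faster $t^{-1/2}$ away from that degenerate frequency). For the remaining terms $R_0({\bf V}R_0)^k$ one writes out the associated oscillatory integrals over $[-\pi,\pi]^k$, uses the decay of ${\bf V}$ from (V1) to sum over the intermediate lattice variables, and reruns the stationary-phase and van der Corput analysis — with extra care at the band-edge frequencies $\theta = 0,\pi$, where (V2) again rules out a resonant enhancement — to bound each term by the same $t^{-1/3}$ with a geometrically decaying constant, so that the Born series converges. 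This uniform-in-$(n,m)$ oscillatory-integral bookkeeping, and especially the simultaneous treatment of the interior Airy point $\theta = \pm\pi/2$ and the two band edges in the presence of ${\bf V}$, is the technical core; the details are carried out in \cite{PS}.
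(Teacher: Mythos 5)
The paper itself does not prove this lemma: it is imported verbatim from \cite{PS} ("the following result was proved in \cite{PS}"), so there is no internal argument to compare against, and your plan of reconstructing the proof of \cite{PS} — Stone's formula on the band $[0,4]$, a limiting absorption principle in $B(\si,-\si)$ under (V1), the genericity condition (V2) removing the resonance at the band edges $\la=0,4$, integration by parts plus the $\sqrt{\la}$ edge expansion for the weighted bound \eqref{eq:15}, and the degenerate stationary point $\theta=\pm\pi/2$ of the Bessel-type phase with van der Corput of order three for the $t^{-1/3}$ rate in \eqref{eq:16} — is the right circle of ideas and matches the route taken in the cited source.

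One step, however, would fail as literally stated: for \eqref{eq:16} you propose to insert the \emph{infinite} Born series $R=\sum_{k\ge 0}(-1)^k R_0({\bf V}R_0)^k$ into Stone's formula and to bound each term by $Ct^{-1/3}$ with a geometrically decaying constant. Geometric decay of the constants is not available for a general potential satisfying only ${\bf V}\in l^1_{2\si}$: each iteration of $R_0{\bf V}$ costs a factor of order $\|{\bf V}\|_{l^1}/|\sin\theta|$ (and worse near the edges), and — unlike the continuous setting — the spectrum here is the compact band $[0,4]$, so there is no high-energy regime in which the free resolvent becomes small and forces convergence of the series. This is precisely why \cite{PS} (and the present paper's Appendix A, see \eqref{eq:012}--\eqref{eq:011}) work with the \emph{finite} Born series, i.e.\ the resolvent identity iterated twice, $R=R_0-R_0{\bf V}R_0+R_0{\bf V}R{\bf V}R_0$, and control the last term through the full perturbed resolvent sandwiched between decaying weights, whose boundary values and $\theta$-derivatives are bounded via the Jost-solution/limiting-absorption machinery. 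Replacing your infinite-series summation by this finite expansion (or, alternatively, restricting to small $\|{\bf V}\|$, which is not assumed) is needed to make the argument for \eqref{eq:16} close; the rest of your sketch, including the treatment of \eqref{eq:15}, is sound.
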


\begin{remark}
Unlike the continuous case, the upper bound (\ref{eq:16}) is
non-singular as $t \to 0$ because the discrete case always enjoys
an estimate $\left\| {\bf f} \right\|_{l^\infty_n} \leq  \| {\bf
f} \|_{l^2_n} \leq \| {\bf f} \|_{l^1_n}$.
\end{remark}

Using Lemma \ref{lemma-dispersive} and Theorem 1.2 of Keel-Tao \cite{KT},
the following corollary transfers pointwise decay estimates into Strichartz estimates.

\begin{corollary}[Discrete Strichartz estimates]
\label{corollary-Strichartz} There exists a constant $C > 0$ such that
\begin{eqnarray}
\label{eq:Strichartz1} \left\| e^{-i t H} P_{a.c.}(H) {\bf f}
\right\|_{L^6_t l^{\infty}_n \cap L^{\infty}_t l^2_n} & \leq & C \| {\bf f} \|_{l^2_n},  \\
\label{eq:Strichartz2} \left\| \int_0^t e^{-i (t-s) H} P_{a.c.}(H)
{\bf g}(s) ds \right\|_{L^6_t l^{\infty}_n \cap L^{\infty}_t
l^2_n} & \leq & C \| {\bf g} \|_{L^1_t l^2_n},
\end{eqnarray}
where the norm in $L^p_t l^q_n$ is defined by
$$
\| {\bf f} \|_{L^p_t l^q_n} =  \left( \int_{\mathbb{R}_+} \left(
\| {\bf f}(t) \|_{l^q_n} \right)^p dt \right)^{1/p}.
$$
\end{corollary}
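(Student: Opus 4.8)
The plan is to verify the two hypotheses required by the abstract Strichartz machinery of Keel--Tao \cite{KT} (their Theorem 1.2) for the discrete propagator $U(t) := e^{-itH}P_{a.c.}(H)$, and then to extract the two stated inequalities by selecting the appropriate admissible exponent pairs. In the Keel--Tao setup the underlying measure space is $\mathbb{Z}$ with counting measure (so that its $L^r$ space is $l^r_n$) and the Hilbert space is $l^2_n$; the relevant decay exponent is $\sigma=\frac13$, as dictated by (\ref{eq:16}).

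First I would record the \emph{energy estimate}: since $e^{-itH}$ is unitary on $l^2_n$ and $P_{a.c.}(H)$ is an orthogonal projection, $\|U(t){\bf f}\|_{l^2_n}\le\|{\bf f}\|_{l^2_n}$ for all $t$. Next, the \emph{untruncated dispersive estimate}: because $P_{a.c.}(H)$ is self-adjoint and commutes with $e^{-itH}$, one has $U(t)U(s)^*=e^{-i(t-s)H}P_{a.c.}(H)=U(t-s)$, so (\ref{eq:16}) together with the elementary inequality $(1+|t-s|)^{-1/3}\le|t-s|^{-1/3}$ gives
$$
\|U(t)U(s)^*{\bf g}\|_{l^\infty_n}\le C\,|t-s|^{-1/3}\|{\bf g}\|_{l^1_n}.
$$
If one prefers to argue on the whole line, $U(t)$ extends to $t\in\mathbb{R}$ (using that $H$ is real and self-adjoint, so complex conjugation turns $e^{-itH}$ into $e^{itH}$) and one restricts the resulting estimates to $\mathbb{R}_+$ at the end; alternatively one works directly on $[0,T]$. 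These are exactly the two inputs Keel--Tao requires.

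With $\sigma=\frac13$, a pair $(q,r)$ with $q,r\ge2$ is $\sigma$-admissible when $\frac1q+\frac{1}{3r}=\frac16$ and $(q,r,\sigma)\ne(2,\infty,1)$; the forbidden endpoint never arises here since $\sigma=\frac13\ne1$. Both $(q,r)=(6,\infty)$ (as $\frac16+0=\frac16$) and $(q,r)=(\infty,2)$ (as $0+\frac16=\frac16$) are therefore admissible. Applying the homogeneous Keel--Tao estimate $\|U(t){\bf f}\|_{L^q_tl^r_n}\le C\|{\bf f}\|_{l^2_n}$ with each of these two pairs and intersecting yields (\ref{eq:Strichartz1}). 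For (\ref{eq:Strichartz2}) I would invoke the inhomogeneous Keel--Tao estimate
$$
\Big\|\int_{s<t}U(t)U(s)^*{\bf g}(s)\,ds\Big\|_{L^q_tl^r_n}\le C\|{\bf g}\|_{L^{\tilde q'}_tl^{\tilde r'}_n},
$$
valid for any two $\sigma$-admissible pairs $(q,r)$ and $(\tilde q,\tilde r)$: taking $(\tilde q,\tilde r)=(\infty,2)$ makes the right-hand side $\|{\bf g}\|_{L^1_tl^2_n}$, and taking $(q,r)$ to be each of $(6,\infty)$ and $(\infty,2)$ in turn produces the intersection norm on the left, once we note that for ${\bf g}$ supported on $\mathbb{R}_+$ the causal integral $\int_{s<t}$ coincides with $\int_0^t$.

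Since each step is a direct appeal to the cited results, there is no genuine obstacle here; the only points needing care are the bookkeeping of the admissibility identity $\frac1q+\frac\sigma r=\frac\sigma2$ at $\sigma=\frac13$ — so that the target space $L^6_tl^\infty_n\cap L^\infty_tl^2_n$ is exactly the intersection of two admissible endpoints — and checking that the excluded Keel--Tao endpoint does not intervene.
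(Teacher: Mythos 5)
Your proposal is correct and follows exactly the route the paper intends: the paper derives the corollary by citing Lemma \ref{lemma-dispersive} together with Theorem 1.2 of Keel--Tao, which is precisely the energy estimate plus the $|t-s|^{-1/3}$ dispersive bound and the two sharp $\tfrac13$-admissible pairs $(6,\infty)$ and $(\infty,2)$ that you verify. You have merely written out the bookkeeping that the paper leaves implicit, and all of it checks out.
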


\subsection{Time averaged estimates}

To control the evolution of the varying parameters $(\omega,\theta)$, we derive additional
time averaged estimates. Similar to the continuous case, these estimates are only needed
in one dimension, because the time decay provided by the Strichartz estimates is
insufficient to guarantee time integrability of $\dot{\omega}(t)$ and
$\dot{\theta}(t)-\omega(t)$ bounded from above by the estimates (\ref{33}) and (\ref{33a}).
Without the time
integrability of these quantities, the arguments on the decay of various
norms of ${\bf z}(t)$
satisfying the time evolution problem (\ref{time-evolution-z}) cannot be closed.

\begin{lemma}
\label{le:01} Fix $\sigma > \frac{5}{2}$ and assume (V1) and (V2).
There exists a constant $C > 0$ depending on ${\bf V}$ such that
\begin{eqnarray}
\label{eq:01}
\|<n>^{-3/2} e^{-i t H} P_{a.c.}(H) {\bf f} \|_{l^\infty_n L^2_t} & \leq & C\| {\bf f} \|_{l^2_n} \\
\label{eq:02} \left\|\int_{\mathbb{R}_+} e^{-i t H} P_{a.c.}(H)
{\bf F}(s)dt \right\|_{l^2_n}
& \leq & C\|<n>^{3/2} {\bf F} \|_{l^1_nL^2_t}, \\
\label{eq:033} \left\|<n>^{-\si} \int_0^t e^{-i(t-s)H} P_{a.c.}(H)
{\bf F}(s) ds \right\|_{l^\infty_n  L^2_t} & \leq &
C \|<n>^{\si}  {\bf F} \|_{l^1_n L^2_t} \\
\label{eq:0333} \left\|<n>^{-\si} \int_0^t e^{-i(t-s)H}
P_{a.c.}(H) {\bf F}(s) ds \right\|_{l^\infty_n  L^2_t} & \leq &
C \| {\bf F} \|_{L^1_t l^2_n} \\
\label{eq:03} \left\|\int_0^t e^{-i(t-s)H} P_{a.c.}(H) {\bf F}(s)
ds \right\|_{L^6_tl^\infty_n \cap L^{\infty}_t l^2_n} & \leq & C
\|<n>^3 {\bf F} \|_{L^2_t l^2_n}.
\end{eqnarray}
\end{lemma}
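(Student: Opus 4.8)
The plan is to derive all five inequalities from the two pointwise decay estimates of Lemma~\ref{lemma-dispersive} together with the Strichartz bound \eqref{eq:Strichartz1}, using a $TT^{*}$/duality argument, Young's inequality (Schur's test) in the time variable, Minkowski's integral inequality, and the Christ--Kiselev lemma for the retarded Duhamel integrals. A preliminary observation I would make is that \eqref{eq:15} actually holds for \emph{every} weight exponent $\sigma_{*}>\tfrac52$: if (V1) holds with its fixed $\sigma>\tfrac52$, then $\mathbf V\in l^{1}_{2\sigma_{*}}$ for all $\sigma_{*}\in(\tfrac52,\sigma]$ by nesting of weighted $l^{1}$ spaces, so Lemma~\ref{lemma-dispersive} applies with that $\sigma_{*}$, and for $\sigma_{*}>\sigma$ the estimate only improves. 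I will use the freedom to choose $\sigma_{*}$ below.

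The key estimate is \eqref{eq:01}, and it is the main obstacle. For fixed $n$ regard $T_{n}\colon \mathbf f\mapsto\langle n\rangle^{-3/2}\bigl(e^{-itH}P_{a.c.}(H)\mathbf f\bigr)_{n}$ as a map $l^{2}_{n}\to L^{2}_{t}$; then \eqref{eq:01} is precisely the assertion $\sup_{n}\|T_{n}\|\le C$. By $TT^{*}$, $\|T_{n}\|^{2}=\|T_{n}T_{n}^{*}\|_{L^{2}_{t}\to L^{2}_{t}}$, and $T_{n}T_{n}^{*}$ acts by convolution in $t$ with the kernel $\langle n\rangle^{-3}\kappa_{n}$ (up to reflection/conjugation), where $\kappa_{n}(\tau)=\langle e^{-i\tau H}P_{a.c.}(H)\mathbf e_{n},\mathbf e_{n}\rangle$ and $\mathbf e_{n}$ is the unit vector at site $n$. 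By Young's inequality $\|T_{n}\|^{2}\le\langle n\rangle^{-3}\|\kappa_{n}\|_{L^{1}(\mathbb R)}$, so it suffices to prove $\|\kappa_{n}\|_{L^{1}}\le C\langle n\rangle^{3}$. Here both dispersive estimates are needed: \eqref{eq:16} gives $|\kappa_{n}(\tau)|\le C(1+|\tau|)^{-1/3}$ (since $\|\mathbf e_{n}\|_{l^{1}_{n}}=1$), while \eqref{eq:15} with parameter $\sigma_{*}$ gives $|\kappa_{n}(\tau)|\le C\langle n\rangle^{2\sigma_{*}}(1+|\tau|)^{-3/2}$ (pairing $\mathbf e_{n}$ against the weighted bound costs $\langle n\rangle^{\sigma_{*}}$ on each side). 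Splitting $\int_{\mathbb R}|\kappa_{n}|$ at $|\tau|\sim\langle n\rangle^{12\sigma_{*}/7}$, using the first bound for short times and the second for long times, yields $\|\kappa_{n}\|_{L^{1}}\le C\langle n\rangle^{8\sigma_{*}/7}$; choosing $\sigma_{*}\in(\tfrac52,\tfrac{21}{8}]$ (a nonempty range, allowed by the preliminary observation) this is $\le C\langle n\rangle^{3}$. The difficulty is exactly this bookkeeping: the weight $\langle n\rangle^{-3/2}$ carries no slack, and closing \eqref{eq:01} forces the two decay rates $\tfrac13$ and $\tfrac32$ to be balanced against $\sigma_{*}$ taken essentially at its lower limit.

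Estimate \eqref{eq:02} is the dual of \eqref{eq:01}. Writing $\bigl\|\int_{\mathbb R_{+}}e^{-isH}P_{a.c.}(H)\mathbf F(s)\,ds\bigr\|_{l^{2}_{n}}=\sup_{\|\mathbf g\|_{l^{2}}\le1}\bigl|\int_{\mathbb R_{+}}\langle e^{isH}P_{a.c.}(H)\mathbf g,\mathbf F(s)\rangle\,ds\bigr|$, bounding the pairing componentwise, applying Cauchy--Schwarz in $t$ to each component together with \eqref{eq:01} in the form $\|(e^{isH}P_{a.c.}(H)\mathbf g)_{n}\|_{L^{2}_{t}}\le C\langle n\rangle^{3/2}\|\mathbf g\|_{l^{2}}$ (which holds for $e^{isH}$ as well by time reversibility), and summing in $n$, gives the bound; one checks that the $\langle n\rangle^{3/2}$ weight here genuinely requires \eqref{eq:01} and is not obtained from \eqref{eq:15} alone. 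The estimates \eqref{eq:033} and \eqref{eq:0333} are more routine. For \eqref{eq:033}, the $n$-th component of the Duhamel integral is $\sum_{m}\int_{0}^{t}\kappa_{nm}(t-s)F_{m}(s)\,ds$ with $\kappa_{nm}(\tau)=(e^{-i\tau H}P_{a.c.}(H)\mathbf e_{m})_{n}$; by \eqref{eq:15}, $|\kappa_{nm}(\tau)|\le C\langle n\rangle^{\sigma}\langle m\rangle^{\sigma}(1+|\tau|)^{-3/2}$, hence $\|\kappa_{nm}\|_{L^{1}}\le C\langle n\rangle^{\sigma}\langle m\rangle^{\sigma}$ (the kernel is time-integrable), and Young's inequality for the retarded convolution, the weight $\langle n\rangle^{-\sigma}$, and summation in $m$ close the estimate. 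For \eqref{eq:0333}, push the $L^{2}_{t}$ norm inside the $s$-integral by Minkowski, so each slice is controlled by $\|(e^{-it'H}P_{a.c.}(H)\mathbf F(s))_{n}\|_{L^{2}_{t'}}\le C\langle n\rangle^{\sigma}\|\mathbf F(s)\|_{l^{2}}$ (the $\langle n\rangle^{-\sigma}$-weighted version of \eqref{eq:01}, which follows from \eqref{eq:15} alone), and the $\langle n\rangle^{-\sigma}$ weight cancels the growth, leaving $\int_{0}^{\infty}\|\mathbf F(s)\|_{l^{2}}\,ds=\|\mathbf F\|_{L^{1}_{t}l^{2}_{n}}$.

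Finally, for \eqref{eq:03} I would first establish its non-retarded analogue and then invoke Christ--Kiselev. Writing $\int_{0}^{\infty}e^{-i(t-s)H}P_{a.c.}(H)\mathbf F(s)\,ds=e^{-itH}P_{a.c.}(H)\mathbf g$ with $\mathbf g=\int_{0}^{\infty}e^{isH}P_{a.c.}(H)\mathbf F(s)\,ds$, the left side has $L^{6}_{t}l^{\infty}_{n}\cap L^{\infty}_{t}l^{2}_{n}$ norm $\le C\|\mathbf g\|_{l^{2}}$ by \eqref{eq:Strichartz1}; expanding $\|\mathbf g\|_{l^{2}}^{2}=\iint\langle e^{i(s-t)H}P_{a.c.}(H)\mathbf F(s),\mathbf F(t)\rangle\,ds\,dt$, inserting \eqref{eq:15} with a parameter $\sigma_{*}\in(\tfrac52,3]$ to produce $(1+|s-t|)^{-3/2}$ decay with $\langle n\rangle^{\sigma_{*}}$ weights, and applying Young's inequality in $t$ gives $\|\mathbf g\|_{l^{2}}\le C\|\langle n\rangle^{\sigma_{*}}\mathbf F\|_{L^{2}_{t}l^{2}_{n}}\le C\|\langle n\rangle^{3}\mathbf F\|_{L^{2}_{t}l^{2}_{n}}$. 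Since the time-Lebesgue exponent is $2$ on the input and $6$, resp.\ $\infty$, on the output, the Christ--Kiselev lemma upgrades this non-retarded bound to \eqref{eq:03} with a comparable constant. (The same upgrade is unavailable for \eqref{eq:033} and \eqref{eq:0333}, where the time exponent does not improve and sits innermost, which is why those two are done by direct kernel estimates instead.)
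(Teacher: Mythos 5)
Your proposal is correct, and the four estimates \eqref{eq:02}, \eqref{eq:033}, \eqref{eq:0333}, \eqref{eq:03} are handled essentially as in the paper (duality with \eqref{eq:01}; a componentwise kernel bound from \eqref{eq:15} plus Young's inequality in $t$; Minkowski/averaging plus the $\langle n\rangle^{-\sigma}$-weighted smoothing bound; and the substitution $\mathbf{g}=\int e^{isH}P_{a.c.}(H)\mathbf{F}(s)\,ds$ followed by \eqref{eq:Strichartz1} and Christ--Kiselev, respectively). Where you genuinely diverge is on the core estimate \eqref{eq:01}. The paper does \emph{not} obtain it from the dispersive bounds of Lemma \ref{lemma-dispersive}: it passes through the Cauchy/Stone formula, splits the energy interval into a high-frequency part and a neighborhood of the band edges $\omega=0,4$, and proves the two resolvent bounds of Lemma \ref{le:08} (a limiting-absorption/Kato-smoothing argument via the Born series and the Jost solutions, occupying Appendix A); the weight $\langle n\rangle^{-3/2}$ arises there from the linear growth of the Jost functions on their ``bad'' sides. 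Your route instead is a $TT^{*}$ argument in the time variable for each fixed $n$, reducing \eqref{eq:01} to $\|\kappa_n\|_{L^1_\tau}\leq C\langle n\rangle^{3}$ for the autocorrelation kernel $\kappa_n(\tau)=\langle e^{-i\tau H}P_{a.c.}(H)\mathbf{e}_n,\mathbf{e}_n\rangle$, which you get by interpolating the unweighted $(1+|\tau|)^{-1/3}$ bound \eqref{eq:16} against the weighted $(1+|\tau|)^{-3/2}$ bound \eqref{eq:15}; the numerology closes precisely because $8\sigma_{*}/7\leq 3$ for some admissible $\sigma_{*}>\tfrac52$, i.e.\ $\sigma_{*}\leq\tfrac{21}{8}$. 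This exploits a genuinely discrete feature --- $\mathbf{e}_n$ is a unit vector in $l^1\cap l^2$, so the $TT^{*}$ kernel is pointwise controllable, which has no analogue for $\delta_x$ in Mizumachi's continuous setting --- and, if one takes \eqref{eq:15}--\eqref{eq:16} as given from \cite{PS}, it bypasses Lemma \ref{le:08} and Appendix A entirely. The trade-offs: your argument is shorter and more elementary but is rigid (it hinges on the specific exponents $\tfrac13$ and $\tfrac32$ and on taking $\sigma_{*}$ essentially at its lower limit), whereas the paper's resolvent approach separates frequencies and shows the weight is only needed near the band edges, information that a global $TT^{*}$ bound cannot see.
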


To proceed with the proof, let us set up a few notations. First,
introduce the perturbed  resolvent   $R_V(\la):=(H-\la)^{-1}$ for
$\la \in \mathbb{C} \backslash [0,4]$. We proved in \cite[Theorem
1]{PS} that for any fixed $\omega \in (0,4)$, there exists
$R_V^{\pm}(\omega) = \lim_{\epsilon \downarrow 0} R(\omega \pm i
\epsilon)$ in the norm of $B(\si,-\si)$ for any $\sigma >
\frac{1}{2}$, where $B(\si,-\si)$  denotes the space of bounded
operators from $l^2_{\si}$ to $l^2_{-\si}$.

Next, we recall the Cauchy formula for $e^{i t H}$
\begin{equation}
\label{eq:010}
e^{-i t H} P_{a.c.}(H) = \f{1}{\pi} \int_0^4 e^{-i t \omega} {\rm Im} R_V (\omega) d\omega =
\frac{1}{2\pi i} \int_0^4 e^{-i t \omega} \left[ R^+(\omega) - R^-(\omega) \right] d\omega,
\end{equation}
where the integral is understood in norm $B(\si,-\si)$. We shall parameterize the interval $[0,4]$
by  $\omega = 2 - 2 \cos(\theta)$ for $\theta \in [-\pi,\pi]$.

Let $\chi_0, \chi \in C^{\infty}_0: \; \chi_0 +\chi = 1$ for
all $\theta\in [-\pi, \pi]$, so that
$$
{\rm supp} \chi_0 \subset [-\theta_0,\theta_0] \cup (-\pi, -\pi+\theta_0) \cup (\pi-\theta_0, \pi)
$$
and
$$
{\rm supp} \chi \subset [\theta_0/2,\pi-\theta_0/2]
\cup [-\pi+\theta_0/2,-\theta_0/2],
$$
where
 $0< \theta_0 \leq  \frac{\pi}{4}$.  Note   that the support of $\chi$
stays away  from both $0$ and $\pi$. Following Mizumachi \cite{Miz}, the proof of
Lemma \ref{le:01} relies on the technical lemma.

\begin{lemma}
\label{le:08}
Assume (V1) and (V2). There exists a constant $C > 0$ such that
\begin{eqnarray}
\label{eq:05} & & \sup_{n \in \mathbb{Z}} \|\chi R^{\pm}_V(\omega)
{\bf f} \|_{L^2_{\omega}(0,4)}
\leq C\| {\bf f} \|_{l^2_n}, \\
\label{eq:06} & & \sup_{n \in \mathbb{Z}} \| <n>^{-3/2} \chi_0
R^{\pm}_V(\om) {\bf f}\|_{L^2_\om(0,4)}\leq C\| {\bf f} \|_{l^2_n}.
\end{eqnarray}
\end{lemma}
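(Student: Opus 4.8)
We pass to the spectral (distorted-Fourier) representation of $H$ furnished by \cite{PS}. Under (V1)--(V2) the part of $H$ on the range of $P_{a.c.}(H)$ has purely absolutely continuous spectrum $[0,4]$ and is diagonalized by a transform $\mathcal{F}_V$ built from the Jost-type generalized eigenfunctions $\{e_V(n,\theta)\}_{n\in\mathbb{Z}}$ (solutions of $(-\Delta+{\bf V})e_V(\cdot,\theta)=(2-2\cos\theta)e_V(\cdot,\theta)$, reducing to $e^{\pm i n\theta}$ when ${\bf V}\equiv 0$), which are bounded uniformly in $n$ for $\theta$ away from $\{0,\pm\pi\}$, with $\|\mathcal{F}_V{\bf f}\|_{L^2(-\pi,\pi)}\le C\|{\bf f}\|_{l^2}$. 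Parametrizing $[0,4]$ by $\omega=2-2\cos\theta$, $\theta\in[0,\pi]$ (so $d\omega=2\sin\theta\,d\theta$), and applying the Sokhotski--Plemelj formula to the multiplier $(2\cos\theta-2\cos\theta')^{-1}$ of $R_V^{\pm}(\omega)$, one gets a representation of the schematic form
\[
[R_V^{\pm}(\omega){\bf f}]_n=\frac{1}{2\pi}\,\pv\!\int_{-\pi}^{\pi}\frac{e_V(n,\theta')\,\widehat{{\bf f}}(\theta')}{2\cos\theta-2\cos\theta'}\,d\theta'\;\pm\;\frac{i}{4\sin\theta}\big(e_V(n,\theta)\widehat{{\bf f}}(\theta)+e_V(n,-\theta)\widehat{{\bf f}}(-\theta)\big),\qquad\widehat{{\bf f}}:=\mathcal{F}_V{\bf f},
\]
the last (``on-shell'') term being the residue of the delta distribution at the zeros $\theta'=\pm\theta$ of the denominator, the first a finite Hilbert transform of $e_V(n,\cdot)\widehat{{\bf f}}$. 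Both bounds of Lemma \ref{le:08} follow from this identity together with Plancherel for $\mathcal{F}_V$, the change of variables $\omega=2-2\cos\theta$, and $L^2$-boundedness of the relevant Hilbert transform.

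For (\ref{eq:05}): on ${\rm supp}\,\chi$ the angle $\theta$ stays away from $\{0,\pm\pi\}$, so $\sin\theta$ is bounded below and $\sup_{n,\,\theta\in{\rm supp}\,\chi}|e_V(n,\theta)|<\infty$. The on-shell term then obeys, after the change of variables,
\[
\Big\|\chi(\omega)\,\frac{i}{4\sin\theta}\,e_V(n,\theta)\widehat{{\bf f}}(\theta)\Big\|_{L^2_\omega}^2=\int\chi^2\,\frac{|e_V(n,\theta)|^2\,|\widehat{{\bf f}}(\theta)|^2}{8\sin\theta}\,d\theta\le C\,\|\widehat{{\bf f}}\|_{L^2}^2\le C\,\|{\bf f}\|_{l^2}^2,
\]
uniformly in $n$, and the $\widehat{{\bf f}}(-\theta)$ term is identical. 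For the principal-value term one uses that the finite Hilbert transform $g\mapsto\pv\!\int(2\cos\theta-2\cos\theta')^{-1}g(\theta')\,d\theta'$, cut off by $\chi$ so as to stay away from the stationary points $\{0,\pm\pi\}$ of $\cos$, is bounded on $L^2_\theta$; composing with multiplication by the bounded factor $e_V(n,\cdot)$ yields the same estimate, uniformly in $n$. (Equivalently, one may prove this for the free resolvent $R_0^{\pm}(\omega):=(-\Delta-\omega\mp i0)^{-1}$ by splitting $\sum_m e^{\pm i\theta|n-m|}f_m$ at $m=n$ into two modulated Fourier partial sums and applying Bessel's inequality, then pass to $R_V^{\pm}$ via the second resolvent identity, using the uniform-on-${\rm supp}\,\chi$ invertibility of $I+R_0^{\pm}{\bf V}$ implied by (V1)--(V2).)

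For (\ref{eq:06}) the difficulty concentrates at the band edges $\theta\in\{0,\pm\pi\}$, where $(\sin\theta)^{-1}$ is singular and where, as the ${\bf V}\equiv 0$ case already shows (the free on-shell term would require $\int_0^{\theta_0}|\widehat{{\bf f}}(\theta)|^2\,\theta^{-1}\,d\theta<\infty$, which fails in general), the potential must be used. The edge $\theta=\pm\pi$ (that is, $\omega=4$) is conjugated to the edge $\theta=0$ of $-\Delta-{\bf V}$ by the gauge transformation $u_n\mapsto(-1)^n u_n$, so it suffices to work near $\theta=0$. The essential input is the low-energy behavior of $e_V(n,\theta)$ for a generic potential, coming from the low-energy resolvent expansion of \cite{PS} under (V2): absence of a threshold resonance means the singular rank-one part of $R_V^{\pm}(\omega)$ as $\omega\to 0$ is absent, which translates into a bound of the form $|e_V(n,\theta)|\le C\min\{1,<n>|\theta|\}$ for $|\theta|\le\theta_0$. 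Inserting this into the on-shell term gives, uniformly in $n$,
\[
<n>^{-3}\!\!\int_{|\theta|\le\theta_0}\!\!\frac{|e_V(n,\theta)|^2}{8\sin\theta}\,|\widehat{{\bf f}}(\theta)|^2\,d\theta\le C<n>^{-3}\!\!\int_{|\theta|\le\theta_0}\!\!\frac{\min\{1,<n>^2\theta^2\}}{|\theta|}\,|\widehat{{\bf f}}(\theta)|^2\,d\theta\le C<n>^{-2}\|\widehat{{\bf f}}\|_{L^2}^2,
\]
since $\min\{1,<n>^2\theta^2\}/|\theta|$ is bounded by $<n>$ pointwise; the principal-value term is handled by splitting the $\theta'$-integral into $|\theta'|\sim|\theta|$ (again a Hilbert transform) and $|\theta'|\not\sim|\theta|$ (nonsingular kernel, Schur's test), in both cases combining the weight $<n>^{-3/2}$ with the above decay of $e_V(n,\cdot)$. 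This yields $\sup_n\|<n>^{-3/2}\chi_0 R_V^{\pm}(\omega){\bf f}\|_{L^2_\omega}\le C\|{\bf f}\|_{l^2}$.

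The main obstacle is this near-threshold control of the generalized eigenfunctions — equivalently, the uniform-up-to-the-band-edges invertibility of $I+R_0^{\pm}(\omega){\bf V}$ and the absence of a singular rank-one contribution to $R_V^{\pm}(\omega)$ as $\omega\to 0,4$ — which is precisely the point of hypothesis (V2) (and the reason (\ref{eq:06}) is false for ${\bf V}\equiv 0$); once it is available from \cite{PS}, the remaining ingredients (Plancherel for $\mathcal{F}_V$, the substitution $\omega=2-2\cos\theta$, $L^2$-boundedness of finite Hilbert transforms away from and near the stationary points of $\cos$, and Schur's test) are routine.
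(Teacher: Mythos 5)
Your argument is correct in outline, but it takes a genuinely different route from the paper's. The paper never passes to a distorted Fourier representation. For the high-frequency bound \eqref{eq:05} it expands $R_V$ in the finite Born series $R_0-R_0VR_0+R_0VR_VVR_0$, estimates the free term by splitting $\sum_m e^{i\theta|m-n|}f_m$ at $m=n$ and applying Plancherel (essentially your parenthetical alternative), and disposes of the other two pieces via $\|R_0^{\pm}\|_{l^1\to l^\infty}\leq C/|\sin\theta|$ and the uniform bound on the sandwiched resolvent $UR_VW$ imported from \cite{PS}. For the low-frequency bound \eqref{eq:06} it writes the kernel of $R_V^{\pm}$ explicitly through the Jost solutions $\psi^{\pm}_n=e^{\mp in\theta}\Psi^{\pm}_n$, splits the sum over $m$ into five pieces, treats the pure exponentials by Parseval and the corrections $\Psi^{\pm}-1$ by the Volterra bounds $|\Psi_m^{\pm}-1|\leq C\sum_k|k-m|\,|V_k|$, and absorbs the $1/\sin\theta$ singularity with the Wronskian lower bound $|W(\theta)|\geq W_0>0$, which is exactly where (V2) enters. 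Your key low-energy input $|e_V(n,\theta)|\leq C\min\{1,\langle n\rangle|\theta|\}$ is equivalent to that Wronskian bound combined with the linear growth of the Jost solutions on their bad side (since $e_V\sim T\psi^{+}$ with $T=2i\sin\theta/W$), so the same scattering-theoretic facts are being used, just repackaged. What your version buys is a single representation serving both estimates and a transparent identification of where genericity is used; what the paper's buys is that the explicit Green's-function kernel avoids introducing a principal-value term at all.

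The one step you should not wave at is precisely that principal-value term in \eqref{eq:06}. Near the band edge the kernel behaves like $(\theta'^2-\theta^2)^{-1}$, whose two singularities coalesce as $\theta\to 0$, so this is \emph{not} a bounded finite Hilbert transform plus a harmless remainder: the vanishing of $e_V(n,\theta')$ at $\theta'=0$ must be used inside the integral, not only in the on-shell term. A clean way to close it is the algebraic identity $\frac{1}{\theta'^2-\theta^2}=\frac{1}{2\theta'}\left(\frac{1}{\theta'-\theta}+\frac{1}{\theta'+\theta}\right)$, which converts the term into Hilbert transforms of $e_V(n,\theta')\widehat{{\bf f}}(\theta')/\theta'$, a function whose $L^2_{\theta'}$ norm is at most $C\langle n\rangle\|{\bf f}\|_{l^2}$ by your eigenfunction bound; the weight $\langle n\rangle^{-3/2}$ then more than suffices. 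With that fix, and the trivial remark that the discrete eigenvalue $\omega_0<0$ contributes $\langle {\bf f},\mbox{\boldmath $\psi$}_0\rangle\,\psi_{0,n}/(\omega_0-\omega)$ to $R_V^{\pm}(\omega){\bf f}$, which is harmless for $\omega\in(0,4)$, your proof is complete.
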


The proof of Lemma \ref{le:08} is developed in Appendix A. Using Lemma \ref{le:08},
we can now prove Lemma \ref{le:01}.

\begin{proof1}{\em of Lemma \ref{le:01}.} Let us first show \eqref{eq:033},
since it can be deduced from \eqref{eq:15}, although, it can also be viewed
(and proved) as a dual of \eqref{eq:01} as well. Indeed,  \eqref{eq:033} is equivalent to
$$
\|<n>^{-\si} \int_0^t e^{-i (t-s)H} P_{a.c.}(H) <n>^{-\si} {\bf
G}(s) ds \|_{l^\infty_n L^2_t}\leq \| {\bf G} \|_{l^1_n L^2_t}.
$$
By the Krein's theorem, for every Banach space $X$, the elements
of the space  $l^1_n(X)$  are weak limits of linear combinations
of functions in the form $\delta_{n,n_0} x$, where $x\in X$,
$n_0\in \mathbb{Z}$, and $\delta_{n,n_0}$ is Kronecker's symbol.
Thus, to prove the last estimate, we need to check if it holds for
$G_n(s) = \de_{n,n_0} g(s)$, where $g\in L^2_t$. By Minkowski's
inequality, the obvious embedding $l^2\hookrightarrow l^\infty$
and the dispersive decay estimate \eqref{eq:15} for any $\sigma >
\frac{5}{2}$, we have
\begin{eqnarray*}
& & \left\| <n>^{-\si} \int_0^t e^{- i (t-s)H} P_{a.c.}(H)
<n>^{-\si} \de_{n,n_0} g(s) ds \right\|_{l^\infty_n L^2_t} \\
& & \leq C \left\| <n>^{-\si} \int_0^t \left\| e^{- i (t-s)H}
P_{a.c.}(H) <n>^{-\si} \de_{n,n_0}
\right\|_{l^2_n} |g(s)| ds \right\|_{L^2_t} \\
& & \leq C \left\| \int_0^t \frac{|g(s)| ds}{(1 + t-s)^{3/2}} \right\|_{L^2_t}\leq C \|g\|_{L^2_t},
\end{eqnarray*}
where in the last step, we have used Hausdorff-Young's inequality
$L^1*L^2 \hookrightarrow  L^2$.

We show next that \eqref{eq:02}, \eqref{eq:0333}, \eqref{eq:03}
follow from \eqref{eq:01}. Indeed, \eqref{eq:02} is simply a dual
of \eqref{eq:01} and \eqref{eq:02} is hence equivalent to
\eqref{eq:01}. For \eqref{eq:0333}, we apply the so-called
averaging principle, which tells us that to prove \eqref{eq:0333},
it is sufficient to show it for ${\bf F}(t)= \delta(t-t_0) {\bf
f}$, where ${\bf f} \in l^2_n$ and $\delta(t-t_0)$ is Dirac's
delta-function. Therefore, we obtain
\begin{eqnarray*}
\left\|<n>^{- \si} \int_0^t e^{- i (t-s)H}  \delta(s - t_0)
P_{a.c.}(H) {\bf f} ds \right\|_{l^\infty_n L^2_t} & = &
  \|<n>^{-\si}  e^{- i (t-t_0)H} P_{a.c.}(H)  {\bf f} \|_{l^\infty_n L^2_t} \\
  & \leq &
  \|<n>^{-3/2}  e^{- i (t-t_0)H} P_{a.c.}(H)  {\bf f} \|_{l^\infty_n L^2_t} \\
  & \leq & C\| {\bf f} \|_{l^2_n},
\end{eqnarray*}
where in the last step, we have used \eqref{eq:01}.

For \eqref{eq:03}, we argue as follows. Define
\begin{eqnarray*}
T {\bf F}(t) & = & \int_{\mathbb{R}} e^{-i(t-s)H} P_{a.c.}(H) {\bf
F}(s)ds \\ & = & e^{-i t H} P_{a.c.}(H) \left( \int_{\mathbb{R}} e^{- i s
H} P_{a.c.}(H) {\bf F}(s)ds \right) \\ & = & e^{-i t H} P_{a.c.}(H) {\bf
f},
\end{eqnarray*}
where  ${\bf f} = \int_{\mathbb{R}} e^{-i s H} P_{a.c.}(H) {\bf
F}(s) ds$. By an application of the Strichartz estimate
(\ref{eq:Strichartz1}) and subsequently \eqref{eq:02}, we obtain
\begin{eqnarray*}
\| T {\bf F} \|_{L^6_t l^\infty_n \cap L^\infty_t l^2_n} \leq C
\|{\bf f}\|_{l^2_n} \leq  \|<n>^{3/2} {\bf F}\|_{l^1_n L^2_t} \leq
C \|<n>^{3} {\bf F}\|_{l^2_n L^2_t} = 
C \|<n>^{3} {\bf F}\|_{L^2_t l^2_n},
\end{eqnarray*}
where in the last two steps, we have used H\"older's
inequality and the fact that $l^2_n$ and $L^2_t$ commute. 
Now, by the Christ-Kiselev's lemma (e.g. Theorem 1.2 in \cite{KT}), we conclude
that the estimate (\ref{eq:03}) applies to $\int_{0}^t
e^{-i(t-s)H} P_{a.c.}(H) {\bf F}(s) ds$, similar to $T {\bf
F}(t)$. To complete the proof of Lemma \ref{le:08}, it only
remains to  prove \eqref{eq:01}. Let us write
\begin{eqnarray*}
e^{-i t H} P_{a.c.}(H) = \chi e^{-i t H} P_{a.c.}(H) +  \chi_0 e^{-i t H} P_{a.c.}(H)
\end{eqnarray*}
Take a test function ${\bf g}(t)$ such that $\| {\bf g} \|_{l^1_n
L^2_t}=1$ and obtain
\begin{eqnarray*}
\left| \dpr{\chi e^{-i t H} P_{a.c.}(H) {\bf f}}{{\bf g}(t)}_{n,t}
\right| & = & \f{1}{\pi} \left|\int_0^4 \dpr{ \chi {\rm Im}
R_V(\om) {\bf f}}{\int_{\mathbb{R}}
e^{-i t \om} {\bf g}(t)dt}_n d\om \right| \\
& \leq & C
\int_0^4 \dpr{ |\chi R_V(\om) {\bf f}|}{|\hat{{\bf g}}(\om)|}_n d\om \\
& \leq & C \|\chi R^{\pm}_V(\om) {\bf f}\|_{l^{\infty}_n
L^2_{\om}(0,4)} \|\hat{{\bf g}}\|_{l^1_n L^2_\om(0,4)}.
\end{eqnarray*}
By Plancherel's theorem, $\|\hat{{\bf g}}\|_{l^1_n L^2_\om(0,4)}
\leq \|\hat{{\bf g}}\|_{l^1_n L^2_\om(\mathbb{R})} \leq \| {\bf g}
\|_{l^1_n L^2_t}=1$. Using \eqref{eq:05}, we obtain
$$
\left\| \chi e^{-i t H} P_{a.c.}(H) {\bf f} \right\|_{l^\infty_n
L^2_t} = \sup_{ \|{\bf g}\|_{l^1_n L^2_t}=1} \left| \dpr{\chi
e^{-i t H} P_{a.c.}(H) {\bf f}}{{\bf g}(t)}_{n,t} \right| \leq  C
\|{\bf f} \|_{l^2_n}.
$$
Similarly, using \eqref{eq:06} instead of \eqref{eq:05}, one concludes
$$
\left\|<n>^{-3/2} \chi_0 e^{-i t H} P_{a.c.}(H) {\bf f}
\right\|_{l^\infty_n L^2_t} = \sup_{ \|<n>^{3/2} {\bf g}\|_{l^1_n
L^2_t}=1} \left| \dpr{\chi_0 e^{-i t H} P_{a.c.}(H) {\bf f}}{{\bf
g}(t)}_{n,t} \right| \leq  C \|{\bf f}\|_{l^2_n}.
$$
Combining the two estimates, we obtain (\ref{eq:01}).
\end{proof1}

\section{Proof of Theorem \ref{theorem-main}}

Let ${\bf y}(t) = e^{-i \theta(t)} {\bf z}(t)$ and write the time-evolution problem
for ${\bf y}(t)$ in the form
$$
i \dot{\bf y} = H {\bf y} + {\bf g}_1 + {\bf g}_2  + {\bf g}_3,
$$
where
\begin{eqnarray*}
{\bf g}_1 = \left( {\bf N}(\mbox{\boldmath $\phi$} + {\bf y} e^{-i \theta}) -
{\bf N}(\mbox{\boldmath $\phi$}) \right) e^{- i \theta}, \;\;
{\bf g}_2 = -(\dot{\theta} - \omega) \mbox{\boldmath $\phi$} e^{-i \theta}, \;\;
{\bf g}_3 = - i \dot{\omega} \partial_{\omega} \mbox{\boldmath $\phi$}(\omega) e^{-i \theta}.
\end{eqnarray*}
Let $P_0 = \langle \cdot,\mbox{\boldmath $\psi$}_0 \rangle \mbox{\boldmath $\psi$}_0$,
$Q = (I - P_0) \equiv P_{a.c.}(H)$, and decompose the solution ${\bf y}(t)$
into two orthogonal parts
$$
{\bf y}(t) = a(t) \mbox{\boldmath $\psi$}_0 + \mbox{\boldmath $\eta$}(t),
$$
where $\langle \mbox{\boldmath $\psi$}_0, \mbox{\boldmath $\eta$}
\rangle=0$ and $a(t) = \langle {\bf y}(t), \mbox{\boldmath
$\psi$}_0\rangle$. The new coordinates $a(t)$ and $\mbox{\boldmath
$\eta$}(t)$ satisfy the time evolution problem
$$
\left\{ \begin{array}{ccl} i \dot{a} & = & \omega_0 a + \langle {\bf g}, \mbox{\boldmath $\psi$}_0 \rangle, \\
i \dot{\mbox{\boldmath $\eta$}} & = & H \mbox{\boldmath $\eta$} + Q {\bf g} \end{array} \right.
$$
where ${\bf g} = \sum_{j=1}^3 {\bf g}_j$. The time-evolution
problem for $\mbox{\boldmath $\eta$} \equiv
P_{a.c.}(H)\mbox{\boldmath $\eta$}$ can be rewritten in the
integral form as
\begin{equation}
\label{integral} \mbox{\boldmath $\eta$}(t) = e^{-i t H} Q \mbox{\boldmath $\eta$}(0)
- i \int_0^t e^{-i (t-s) H} Q {\bf g}(s) ds,
\end{equation}
Fix $\sigma > \frac{5}{2}$ and introduce the norms
\begin{eqnarray*}
&& M_1 = \| \mbox{\boldmath $\eta$} \|_{L^6_t l^\infty_n}, \quad
M_2 = \| \mbox{\boldmath $\eta$} \|_{L^\infty _t l^2_n}, \quad
M_3 = \| <n>^{-\si} \mbox{\boldmath $\eta$} \|_{l^\infty_n L^{2}_t}, \\
&& M_4 = \| a \|_{L^2_t}, \quad M_5 = \| a \|_{L^{\infty}_t}, \quad
M_6 = \| \omega -\omega(0) \|_{L^{\infty}_t},
\end{eqnarray*}
where the integration in $L^p_t$ is performed on an interval
$[0,T]$ for any $T \in (0,\infty)$. Our goal is to show that
$\dot{\omega}$ and $\dot{\theta} -\omega$ are in $L^1_t$, while
the norms above satisfy an estimate of the form
\begin{equation}
\label{eq:055} \suml_{j=1}^5 M_j \leq C \|{\bf y}(0)\|_{l^2_n} + C
\left( \suml_{j=1}^6 M_j \right)^2
\end{equation}
and
\begin{equation}
\label{eq:055a} M_6 \leq C \epsilon^{2 - \frac{1}{p}} (M_3 +
M_4)^2,
\end{equation}
for some $T$-independent constant $C > 0$ uniformly in
$\suml_{j=1}^6 M_j \leq C \delta \epsilon^{\frac{1}{2 p}}$, where
small positive values of $(\epsilon,\delta)$ are fixed by the
initial conditions $\omega(0) = \omega_0 + \epsilon$ and $\|{\bf
y}(0)\|_{l^2_n} \leq C_0 \delta \epsilon^{\frac{1}{2 p}}$ for some
$C_0 > 0$. The estimate (\ref{eq:055}) and (\ref{eq:055a}) allow 
us to conclude, by elementary continuation arguments, that
$$
\suml_{j=1}^5 M_j \leq C \|{\bf y}(0)\|_{l^2_n} \leq C \delta
\epsilon^{\frac{1}{2 p}}
$$
and $|\omega(t) - \omega_0 - \epsilon| \leq C \delta^2 \epsilon^2$
uniformly on $[0,T]$ for any $T \in (0,\infty)$. By interpolation, $a \in L^6_t$
so that ${\bf z}(t) \in L^6([0,T],l^{\infty}_n)$. Theorem
\ref{theorem-main} then holds for $T = \infty$. In particular,
since $\dot{\omega}(t) \in L^1_t(\mathbb{R}_+)$
and $|\omega(t) - \omega_0 - \epsilon| \leq C \delta^2 \epsilon^2$,
there exists $\omega_{\infty} := \lim_{t \to \infty} \omega(t)$ so that
$\omega_{\infty} \in (\omega_0,\omega_0 + \epsilon_0)$. In addition, 
since ${\bf z}(t) \in L^6(\mathbb{R}_+,l^{\infty}_n)$, then
$$
\lim_{t \to \infty} \| {\bf u}(t) - e^{-i \theta(t)} \mbox{\boldmath $\phi$}(\omega(t)) \|_{l^{\infty}_n} =
\lim_{t \to \infty} \| {\bf z}(t) \|_{l^{\infty}_n} = 0.
$$

{\bf Estimates for $M_6$:} By the estimate \eqref{33},  we have
\begin{eqnarray*}
\int_0^T |\dot{\omega}| dt & \leq & C \epsilon^{2-\frac{1}{p}}
\|<n>^{-2 \si} |{\bf y}|^2\|_{l^\infty_n L^1_t} \left( \| <n>^{2 \si}
\mbox{\boldmath $\psi$}_1 \|_{l^1} +
\| <n>^{2 \si} \mbox{\boldmath $\psi$}_2 \|_{l^1}  \right) \\
& \leq & C \epsilon^{2-\frac{1}{p}} \|<n>^{-\si} {\bf y}
\|_{l^\infty_n L^2_t}^2 \\ & \leq & C \epsilon^{2-\frac{1}{p}}
(M_3+M_4)^2,
\end{eqnarray*}
where we have used the fact that $\mbox{\boldmath $\psi$}_1$ and
$\mbox{\boldmath $\psi$}_2$
decay exponentially as $|n| \to \infty$. As a result, we obtain
$$
M_6 \leq \| \dot{\omega} \|_{L^1_t} \leq C
\epsilon^{2-\frac{1}{p}} (M_3 + M_4)^2.
$$
Similarly, we also obtain that
\begin{eqnarray*}
\int_0^T |\dot{\theta} - \omega| dt \leq C \epsilon^{1-\frac{1}{p}}
(M_3+M_4)^2.
\end{eqnarray*}

{\bf Estimates for $M_4$ and $M_5$:} We use the projection formula
$a = \langle {\bf y}, \mbox{\boldmath $\psi$}_0\rangle$ and recall
the orthogonality relation \eqref{constraints}, so that
$$
\langle {\bf z}, \mbox{\boldmath $\psi$}_0\rangle = \langle {\rm
Re}{\bf z}, \mbox{\boldmath $\psi$}_0-\mbox{\boldmath
$\psi$}_1\rangle + i \langle {\rm Im} {\bf z}, \mbox{\boldmath
$\psi$}_0-\mbox{\boldmath $\psi$}_2\rangle.
$$
By Lemma \ref{lemma-bifurcation} and definitions of
$\mbox{\boldmath $\psi$}_{1,2}$ in
(\ref{eigenvectors-normalized}), we have
$$
\| <n>^{2 \si} (\mbox{\boldmath
$\psi$}_0- \mbox{\boldmath $\psi$}_{1,2})\|_{l^2_n} \leq C |\omega-\omega_0|
$$
for some $C > 0$. Provided $\sigma > \frac{1}{2}$, we obtain
\begin{eqnarray*}
M_4 & = & \|\langle {\bf y}, \mbox{\boldmath
$\psi$}_0\rangle\|_{L^2_t} \leq \| \langle {\rm Re} {\bf z},
\mbox{\boldmath $\psi$}_0-\mbox{\boldmath $\psi$}_1\rangle
\|_{L^2_t} +
\| \langle {\rm Im} {\bf z}, \mbox{\boldmath $\psi$}_0-\mbox{\boldmath $\psi$}_2\rangle \|_{L^2_t}\\
& \leq & \|<n>^{-2 \si} {\bf z}\|_{L^2_t l^2_n} \left( \| <n>^{2 \si}
(\mbox{\boldmath $\psi$}_0-\mbox{\boldmath
$\psi$}_1)\|_{L^{\infty}_t l^2_n} +
 \| <n>^{2 \si} (\mbox{\boldmath $\psi$}_0-\mbox{\boldmath $\psi$}_2)
 \|_{L^{\infty}_t l^2_n} \right) \\
& \leq &  C \|<n>^{-\si} {\bf y} \|_{l^\infty_n L^2_t} \|
\omega -\omega_0 \|_{L^{\infty}_t} \leq C (M_3+M_4) M_6
\end{eqnarray*}
and, similarly,
\begin{eqnarray*}
M_5 & = & \|\langle {\bf y}, \mbox{\boldmath
$\psi$}_0\rangle\|_{L^{\infty}_t} \leq \| \langle {\rm Re} {\bf z}, \mbox{\boldmath
$\psi$}_0-\mbox{\boldmath $\psi$}_1\rangle \|_{L^{\infty}_t} +
\| \langle {\rm Im} {\bf z}, \mbox{\boldmath $\psi$}_0-\mbox{\boldmath $\psi$}_2\rangle \|_{L^{\infty}_t} \\
&\leq &  \|{\bf y}\|_{L^\infty_t l^2_n} \left( \|(\mbox{\boldmath $\psi$}_0-\mbox{\boldmath
$\psi$}_1)\|_{L^{\infty}_t l^2_n} +
 \| (\mbox{\boldmath $\psi$}_0-\mbox{\boldmath $\psi$}_2)
 \|_{L^{\infty}_t l^2_n} \right) \leq
C(M_2 + M_5) M_6.
\end{eqnarray*}

{\bf Estimates for $M_3$:} The free solution in the integral equation
(\ref{integral}) is estimated by \eqref{eq:01} as
\begin{eqnarray*}
\|<n>^{-\si} e^{- i t H} Q \mbox{\boldmath $\eta$}(0)\|_{l^\infty_n L^2_t} \leq
\|<n>^{-3/2} e^{- i t H} Q \mbox{\boldmath $\eta$}(0)\|_{l^\infty_n L^2_t} \leq
C \|\mbox{\boldmath $\eta$}(0) \|_{l^2_n}.
\end{eqnarray*}
Since $\dot{\omega}$ and $\dot{\theta} - \omega$ are $L^1_t$ thanks to the
estimates above, we treat the terms of the integral equation
(\ref{integral}) with ${\bf g}_2$ and ${\bf g}_3$ similarly. By \eqref{eq:0333},
we obtain
\begin{eqnarray*}
\|<n>^{-\si} \int_0^t e^{-i (t-s) H} Q {\bf g}_{2,3}(s) ds\|_{l^\infty_n L^2_t}
& \leq & C \|{\bf g}_{2,3}\|_{L^1_t l^2_n}  \\
& \leq & C \left(\|\dot{\theta}- \omega\|_{L^1_t}
\|\mbox{\boldmath $\phi$}(\omega)\|_{L^{\infty}_t l^2_n}
+ \|\dot{\omega}\|_{L^1_t} \|\partial_{\omega} \mbox{\boldmath $\phi$}(\omega) \|_{L^{\infty}_t l^2_n} \right) \\
& \leq & C \epsilon^{1-\frac{1}{2p}} (M_3 + M_4)^2.
\end{eqnarray*}
On the other hand, using the bound (\ref{estimate-N}) on the vector field ${\bf g}_1$,
we estimate by \eqref{eq:033} and \eqref{eq:0333}
\begin{eqnarray*}
&& \|<n>^{- \si} \int_0^t e^{-i (t-s) H} Q {\bf g}_1(s) ds\|_{l^\infty_n L^2_t} \leq
C(\|<n>^{\si} |\mbox{\boldmath $\phi$}(\omega)|^{2p} |{\bf z}|\|_{l^1_n L^2_t}+
\||{\bf z}|^{2p+1}\|_{L^1_t l^2_n})\\
&& \leq C \left(\|<n>^{-\si} {\bf y}\|_{l^\infty_n L^2_t}
\|<n>^{\si}|\mbox{\boldmath $\phi$}(\omega)|^{2p}\|_{L^{\infty}_t l^1_n} +
\|a \|_{L^{2p+1}_t}^{2p+1}\|\mbox{\boldmath $\psi$}_0\|_{l^{2(2p+1)}_n}^{2p+1} +
\|\mbox{\boldmath $\eta$}\|_{L^{2p+1}_t l^{2(2p+1)}_n}^{2p+1} \right) \\
&& \leq C \left( (M_3+M_4) M_6 + M_4^2 M_5^{2p-1} +
\|\mbox{\boldmath $\eta$}\|_{L^{2p+1}_t l^{2(2p+1)}_n}^{2p+1} \right),
\end{eqnarray*}
where we have
$$
\|a \|_{L^{2p+1}_t}^{2p+1} \leq \| a \|_{L^{\infty}_t}^{2p-1} \| a \|_{L^2_t}^2.
$$
and
$$
\|<n>^{\si}|\mbox{\boldmath $\phi$}(\omega)|^{2p}\|_{l^1_n}
\leq C \| \omega - \omega_0\|_{L^{\infty}_t},
$$
the latter estimate follows from Lemma \ref{lemma-bifurcation}.

To deal with the last term in the estimate, we use the Gagliardo-Nirenberg inequality,
that is, for all $2\leq r,w\leq \infty$ such that $\frac{6}{r} + \frac{2}{w} \leq 1$,
there is a $C > 0$ such that
$$
\|\mbox{\boldmath $\eta$}\|_{L^r_t l^w_n} \leq C \left( \| \mbox{\boldmath $\eta$} \|_{L^6_t l^{\infty}_n}
+ \| \mbox{\boldmath $\eta$} \|_{L^{\infty}_t l^2_n} \right) = C (M_1+M_2).
$$
If $p\geq 3$, then $((2p+1), 2(2p+1))$ is a Strichartz
pair satisfying $\frac{6}{2p + 1} + \frac{2}{2(2p+1)} \leq 1$ and hence,
combining all previous inequalities, we have
\begin{eqnarray*}
M_3 \leq C\left( \|\mbox{\boldmath $\eta$}(0)\|_{l^2_n}+
\epsilon^{1-\frac{1}{p}} (M_3+M_4)^2 + (M_3+M_4) M_6 + M_4^2 M_5^{2p-1} +
(M_1+M_2)^{2p+1}\right),
\end{eqnarray*}
which agrees with the estimate (\ref{eq:055}) for any $p \geq 3$. \\

{\bf Estimates for $M_1$ and $M_2$:} With the help of \eqref{eq:Strichartz1}, the free solution is estimated by
$$
\|e^{- i t H} Q \mbox{\boldmath $\eta$}(0)\|_{L^6_t  l^\infty_n \cap L^\infty_t l^2_n}
\leq C \|\mbox{\boldmath $\eta$}(0)\|_{l^2_n}.
$$
With the help of \eqref{eq:Strichartz2}, the nonlinear terms involving ${\bf g}_{2,3}$ are estimated by
\begin{eqnarray*}
\left\| \int_0^t e^{-i (t-s) H} Q {\bf g}_{2,3}(s) ds \right\|_{L^6_t  l^\infty_n \cap L^\infty_t l^2_n}
& \leq &
C \|{ \bf g}_{2,3}\|_{L^1_t l^2_n} \\ & \leq & C \epsilon^{1-\frac{1}{2p}} (M_3 + M_4)^2.
\end{eqnarray*}
The nonlinear term involving ${\bf g}_1$ is estimated by the sum of two
computations thanks to the bound (\ref{estimate-N}). The first computation is
completed with the help of \eqref{eq:03},
\begin{eqnarray*}
\left\| \int_0^t e^{-i (t-s) H}Q |\mbox{\boldmath $\phi$}(\omega)|^{2p} |{\bf y}|
ds \right\|_{L^6_t l^\infty_n \cap L^\infty_t l^2_n} & \leq &
C \| <n>^3 |\mbox{\boldmath $\phi$}(\omega)|^{2p} |{\bf y}|\|_{L^2_t l^2_n} \\
& \leq &
\| <n>^{3 + \si} |\mbox{\boldmath $\phi$}(\omega)|^{2p}  \|_{L^{\infty}_t l^2_n}
\|<n>^{-\si} {\bf y}\|_{l^\infty_n L^2_t} \\
& \leq & C (M_3+M_4) M_6,
\end{eqnarray*}
whereas the second computation is completed with the help of \eqref{eq:Strichartz2},
\begin{eqnarray*}
\left\| \int_0^t e^{-i (t-s) H}Q |{\bf y}|^{2p+1} ds \right\|_{L^6_t  l^\infty_n \cap L^\infty_t l^2_n}
& \leq & C \||{\bf y}|^{2p+1}\|_{L^1_t l^2_n} \leq C \|{\bf y}\|_{L^{2p+1}_t l^{2(2p+1)}_n}^{2p+1} \\
& \leq & C \left( M_4^2 M_5^{2p-1} +  (M_1 + M_2)^{2p+1} \right),
\end{eqnarray*}
provided $p\geq 3$ holds. We conclude that the estimates for $M_1$ and $M_2$ are the same as the one for
$M_3$.

\section{Numerical results}

We now add some numerical computations which illustrate the asymptotic
stability result of Theorem \ref{theorem-main}. In particular, we shall
obtain numerically the rate, at which the localized perturbations approach to
the asymptotic state of the small discrete soliton. One advantage of numerical
computations is that they are not limited to the case of $p \geq 3$
(which is the realm of our theoretical analysis above), but can
be extended to arbitrary $p \geq 1$. In what follows, we illustrate the
results for $p=1$ (the cubic DNLS), $p = 2$ (the quintic DNLS), and $p = 3$
(the septic DNLS).

Let us consider the single-node external potential with $V_n = - \delta_{n,0}$
for any $n \in \mathbb{Z}$. This potential is known (see Appendix A in \cite{KKK}) to have
only one negative eigenvalue at $\omega_0 < 0$, the continuous spectrum at
$[0,4]$, and no resonances at $0$ and $4$, so it satisfies assumptions (V1)--(V3).
Explicit computations show that the eigenvalue exists at $\omega_0 = 2 - \sqrt{5}$
with the corresponding eigenvector $\psi_{0,n} = e^{-\kappa |n|}$ for any $n \in \mathbb{Z}$, where
$\kappa = {\rm arcsinh}(2^{-1})$. The stationary solutions of 
the nonlinear difference equation (\ref{stationaryDNLS})
exist in a local neighborhood of the ground state of $H = -\Delta + {\bf V}$,
according to Lemma \ref{lemma-bifurcation}. We shall consider numerically the case
$\gamma = -1$, for which the stationary solution bifurcates to the domain $\omega < \omega_0$.
Figure \ref{afig2} illustrates the stationary solutions for $p = 1$ and two different values
of $\omega$, showcasing its increased localization (decreasing
width and increasing amplitude), as $\omega$ deviates from $\omega_0$ towards the negative domain.

\begin{figure}
\begin{center}
\includegraphics[height=7cm]{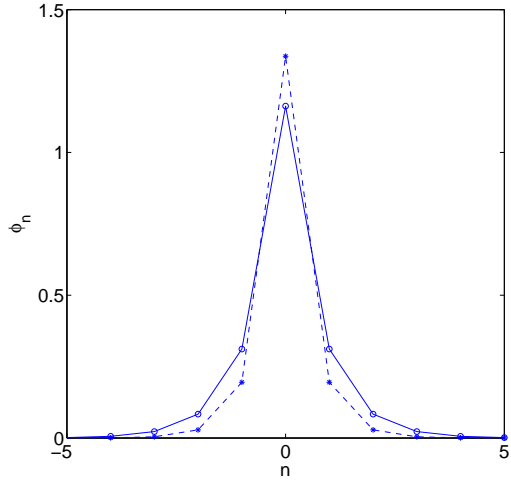}
\end{center}
\caption{Two profiles of the stationary solution of the nonlinear difference equation 
(\ref{stationaryDNLS}) for $V_n = -\delta_{n,0}$, $p = 1$, and for $\omega=-2$ (solid line with circles)
and $\omega=-5$ (dashed line with stars).}
\label{afig2}
\end{figure}

\begin{figure}
\begin{center}
\includegraphics[height=6.8cm]{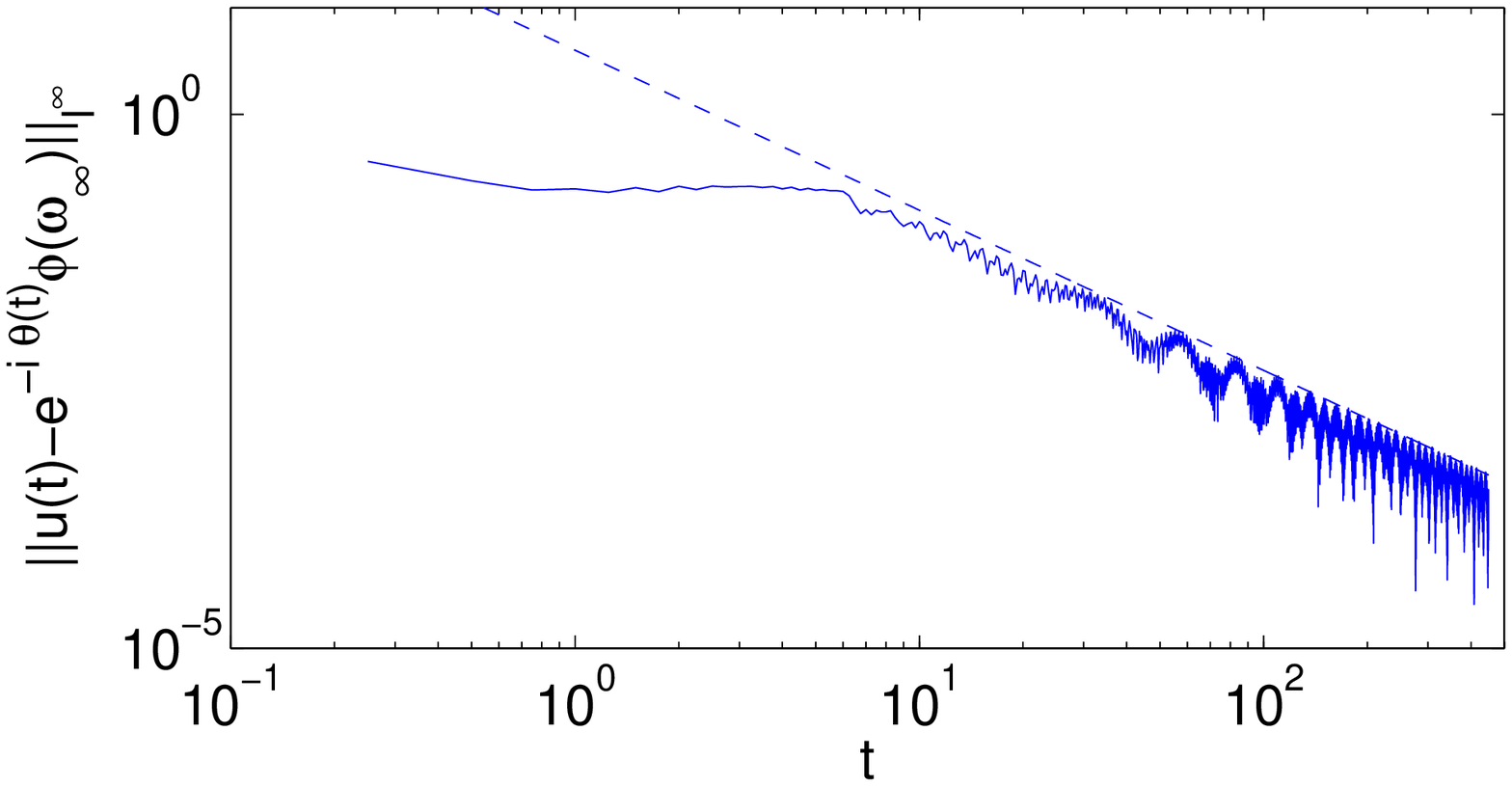}
\includegraphics[height=6.8cm]{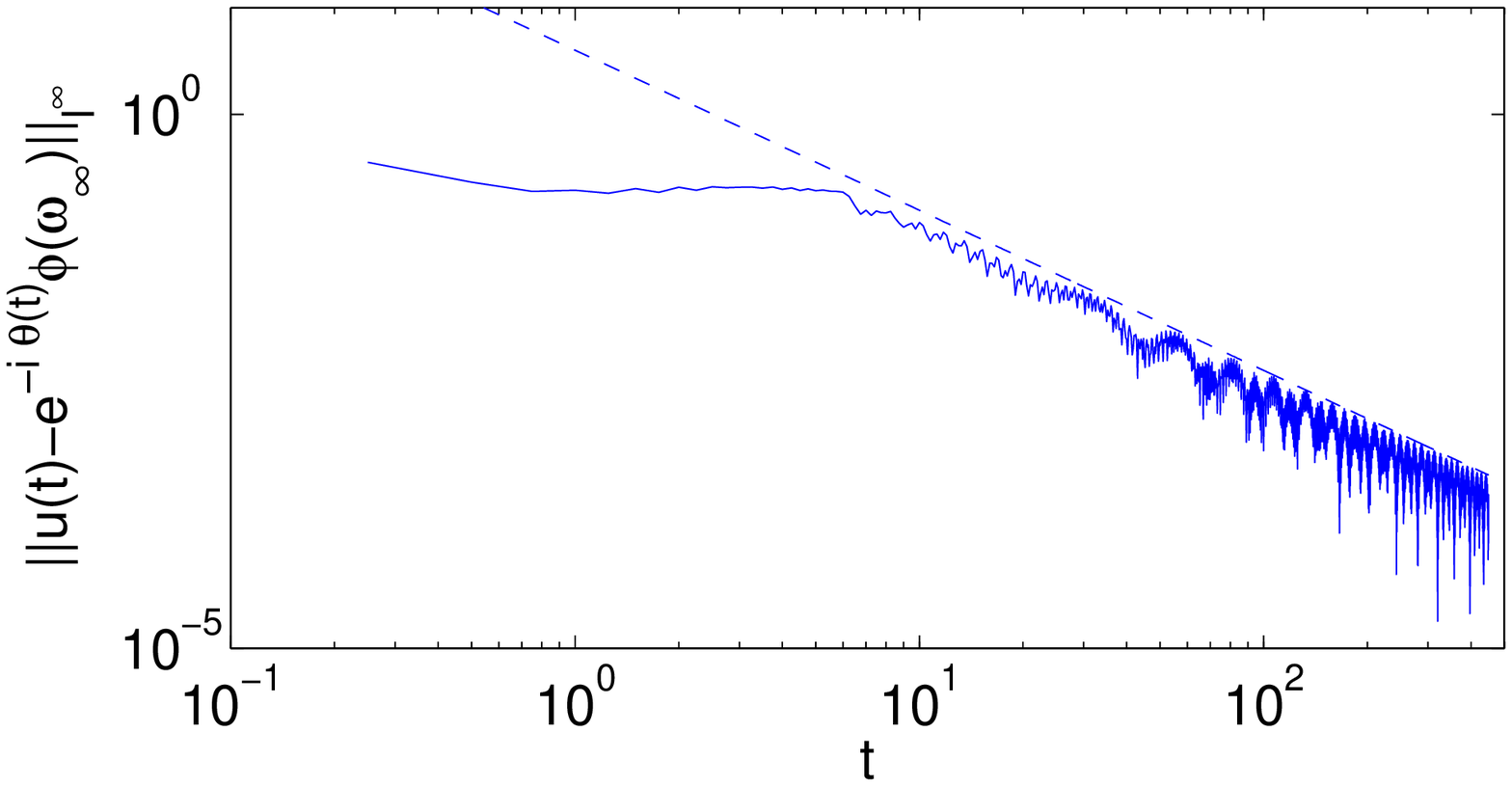}
\includegraphics[height=6.8cm]{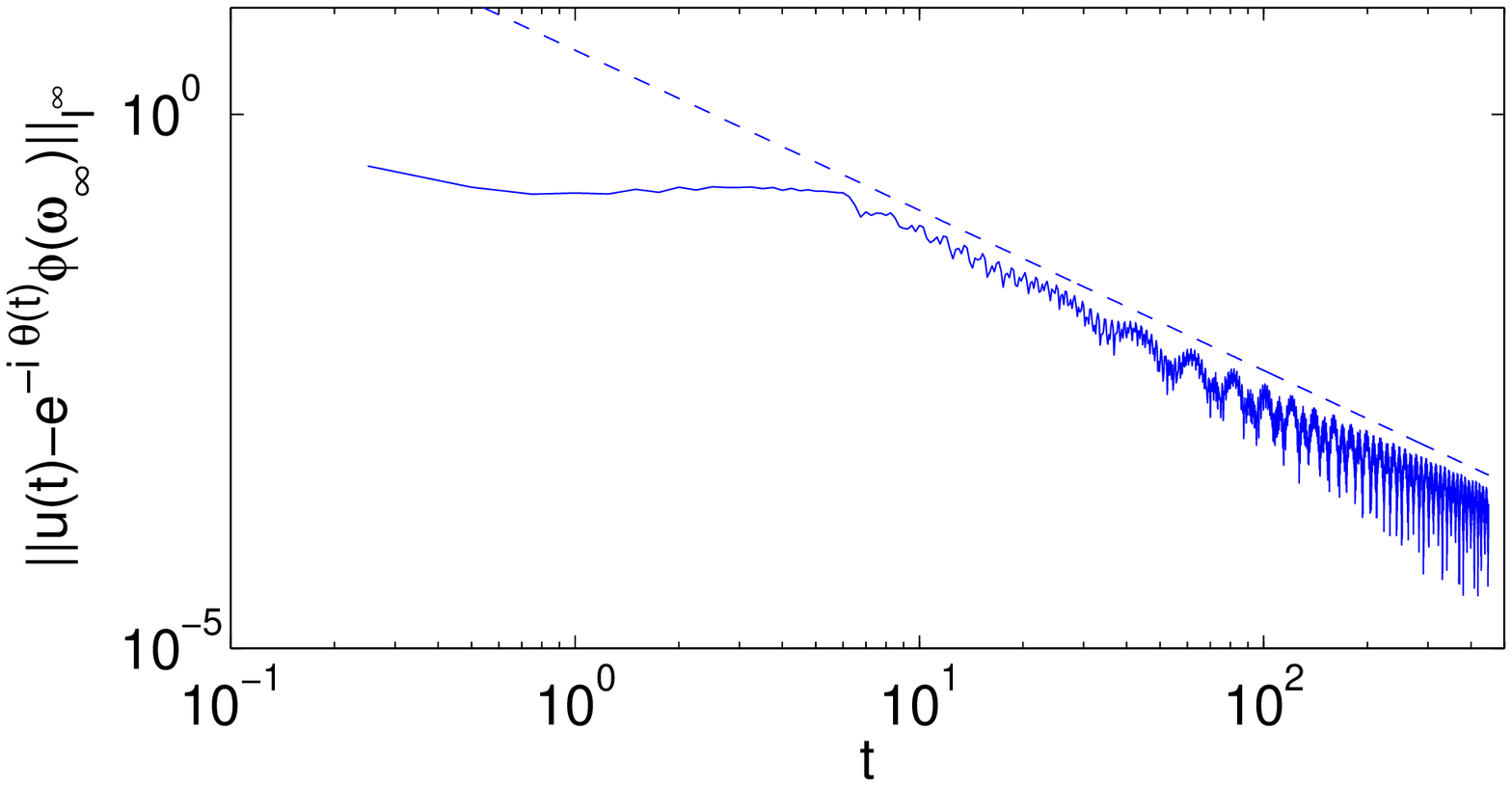}
\end{center}
\caption{Evolution for $p=3$ (top), $2$ (middle), $1$ (bottom)
of $\| {\bf u}(t) - e^{-i \theta(t)} \mbox{\boldmath $\phi$}(\omega_{\infty})\|$ 
as a function of time in a log-log scale (solid)
and comparison with a $t^{-3/2}$ power law decay (dashed) as a guide to
the eye.}
\label{afig4}
\end{figure}

In order to examine the dynamics of the DNLS equation (\ref{dNLS})
we consider single-node initial data $u_n=A \delta_{n,0}$ for any 
$n \in \mathbb{Z}$, with $A=0.75$,
and observe the temporal dynamics of the solution ${\bf u}(t)$. The resulting
dynamics involves the asymptotic relaxation of the localized perturbation into a 
discrete soliton after shedding of some ``radiation''. This dynamics was found to be
typical for
all values of $p = 1,2,3$. In Figure \ref{afig4}, upon suitable subtraction of the phase
dynamics, we illustrate the approach of the wave profile to its asymptotic
form in the $l^{\infty}$ norm. The asymptotic form is obtained by running
the numerical simulation for sufficiently long times, so that the profile
has relaxed to the stationary state. Using a fixed-point algorithm, we identify
the stationary state with the same $l^2$ norm (as the central portion
of the lattice) and confirm that the result of further temporal dynamics is essentially
identical to the stationary state. Subsequently
the displayed $l^{\infty}$ norm of the deviation from the asymptotic
profile is computed, appropriately eliminating the phase by using the gauge invariance of the 
DNLS equation (\ref{dNLS}). 

We have found from Figure \ref{afig4} in the cases $p=3$ (top panel), $p=2$ (middle panel) and
$p=1$ (bottom panel) that the approach to the stationary state follows a power law which is
well approximated as $\propto t^{-3/2}$. The dashed line on all three figures
represents such a decay in each of the cases. We note that the decay rate observed 
in numerical simulations of the DNLS equation (\ref{dNLS}) is faster than the 
decay rate $\propto t^{-1/6-p}$ for any $p > 0$ in Theorem \ref{theorem-main}.

\appendix
\section{Proof of Lemma \ref{le:08}}

For the proof of Lemma \ref{le:08}, we will have to show both
the ``high frequency'' estimate \eqref{eq:05} and
the ``low frequency'' estimate \eqref{eq:06}. To simplify notations, we drop
the bold-face font for vectors on $\mathbb{Z}$ in the appendix.

\subsection{Proof of  \eqref{eq:05}}

Recall the finite  Born series representation of $R_V$
\begin{equation}
\label{eq:012}
R(\omega) = R_0(\omega) - R_0(\omega) V R_0(\omega) +
R_0(\omega) V R(\omega) V R_0(\omega),
\end{equation}
which is basically nothing but the resolvent identity iterated twice.
We have shown in \cite{PS} that for the ``sandwiched resolvent''
$G_{U,W}(\omega) = U R_V(\omega) W$, we have the bounds (see estimate (33) in \cite{PS})
\begin{equation}
\label{eq:011}
\sup_{\theta\in [-\pi, \pi]} \sum_m |G_m(\omega)|+ \left| \f{d}{d\theta} G_m(\omega) \right|
\leq C\|U\|_{\l^2_\si}|W\|_{l^2_\si}.
\end{equation}
for any fixed $\si > \frac{5}{2}$, where $\omega = 2 - 2\cos(\theta)$.

For the three pieces arising from \eqref{eq:012},
similar arguments apply. Starting with the free resolvent term, we have
\begin{eqnarray*}
& &  \sup_{n \in \mathbb{Z}} \int_0^4 \chi |(R_0^{\pm}(\om) f)_n|^2 d\om\leq C
\sup_{n \in \mathbb{Z}} \int_{-\pi}^{\pi} \f{\chi}{\sin(\theta)} \left|\sum_{m \in \mathbb{Z}}
e^{i \theta |m-n|} f_m \right|^2 d\theta \leq \\
& & \leq C \sup_{n \in \mathbb{Z}} \int_{|\theta| \in [\theta_0/2, \pi-\theta_0/2]}
\left( \left| \sum_{m\geq n} e^{i \theta m} f_m \right|^2 + \left|
\sum_{m< n} e^{-i \theta m} f_m \right|^2 \right) d\theta.
\end{eqnarray*}
Introducing the sequence
$$
(g^n)_m:=\left\{\begin{array}{l l} f_m & m\geq n \\ 0 & m<n
\end{array}\right.
$$
we see that the last expression is simply
$C(\|\widehat{g^n}\|_{L^2[\theta_0/2, \pi-\theta_0/2]}^2
+ \|\widehat{f-g^n}\|_{L^2[\theta_0/2, \pi-\theta_0/2]}^2)$, which is equal by Plancherel's identity to
$$
C\|g^n\|_{l^2}^2+\|f-g^n\|_{l^2}^2\leq 2C \|f\|_{l^2}^2.
$$

For the second piece in \eqref{eq:012}, we use that
$\|R_0^{\pm}(\om)\|_{l^1\to l^\infty}\leq C/\sin(\theta)$ and
$|\sin(\theta)| \geq C_0$ on $[\theta_0/2, \pi-\theta_0/2]$
for some $C_0 > 0$, to conlcude
\begin{eqnarray*}
\sup_{n \in \mathbb{Z}} \| \chi R^{\pm}_0(\omega) V R^{\pm}_0(\omega)
f\|_{L^2_\la(0,4)}^2 & \leq & \int_{-\pi}^{\pi}
\f{\chi}{\sin^3(\theta)}  \left(\sum_{n \in \mathbb{Z}} |V_n| |R^{\pm}_0(\omega)
f_n| \right)^2 d\theta \\
& \leq & C \| V \|_{l^1} \sup_{n \in \mathbb{Z}} \int_{-\pi}^{\pi}  \chi
\left| (R^{\pm}_0(\omega) f)_n \right|^2 d\theta,
\end{eqnarray*}
by the triangle inequality. At this point,
we have reduced the estimate to the previous case, provided that $V\in l^1$.

For the third  piece in \eqref{eq:012}, we make use of \eqref{eq:011}.
We have, similar to the previous estimate,
\begin{eqnarray*}
& & \sup_{n \in \mathbb{Z}} \| \chi R^{\pm}_0(\omega) V R^{\pm}_V(\omega) V R^{\pm}_0(\omega)
 f\|_{L^2_\om(0,4)}^2=  \\
& &  \sup_{n \in \mathbb{Z}} \| \chi R^{\pm}_0(\omega) V R_V^{\pm}(\omega) |V|^{1/2} {\rm sgn}(V)
V^{1/2}   R^{\pm}_0(\omega) f\|_{L^2_\om(0,4)}^2= \\
& & \sup_{n \in \mathbb{Z}} \| \chi R^{\pm}_0(\omega) G_{V, |V|^{1/2} {\rm sgn}(V) }[
|V|^{1/2}  R^{\pm}_0(\omega) f]\|_{L^2_\om(0,4)}^2  \\
& & \leq C  \|V\|_{l^2_\si}^2 \||V|^{1/2}\|_{l^2_\si}^2 \||V|^{1/2}\|_{l^1}^2
\sup_{n \in \mathbb{Z}}  \int_{-\pi}^{\pi}  \chi |(R^{\pm}_0(\omega) f)_n|^2 d\theta,
\end{eqnarray*}
where in the  last inequality, we have again reduced the estimate to the first case.

\subsection{Proof of  \eqref{eq:06}}

We only consider the interval $[-\theta_0,\theta_0]$ in the compact support
of $\chi_0(\theta)$ since the arguments for other intervals are similar.
Following the algorithm in \cite{Miz} and the formalism in \cite{PS},
we let $\psi^{\pm}(\theta)$ be two linearly independent solutions of
\begin{equation}
\label{Schrodinger-scattering}
\psi_{n+1} + \psi_{n-1} + (\om - 2) \psi_n = V_n \psi_n, \quad n \in \mathbb{Z},
\end{equation}
according to the boundary conditions $\left| \psi^{\pm}_n - e^{\mp i n \theta} \right| \to 0$
as $n \to \pm \infty$. Let $\psi^{\pm}_n(\theta) = e^{\mp i n \theta} \Psi_n^{\pm}(\theta)$
for all $n \in \mathbb{Z}$. Using the Green function representation, we obtain
\begin{eqnarray*}
\Psi^+_n(\theta) & = & 1 - \frac{i}{2 \sin \theta}
\sum_{m = n}^{\infty} \left( 1 - e^{-2 i \theta (m-n)}
\right) V_m \Psi^+_m(\theta),
\\ \Psi^-_n(\theta) & = & 1 - \frac{i}{2 \sin \theta}
\sum_{m = -\infty}^{n} \left( 1 - e^{-2i \theta (n-m)}
\right) V_m \Psi^-_m(\theta).
\end{eqnarray*}
The discrete Green
function for the resolvent operators $R^{\pm}(\omega)$ has the kernel
$$
\left[ R_V^{\pm}(\omega) \right]_{n,m} = \frac{1}{W(\theta_{\pm})} \left\{ \begin{array}{cc}
\psi_n^+(\theta_{\pm}) \psi_m^-(\theta_{\pm}) \;\; \mbox{for} \;\; n \geq m \\
\psi_m^+(\theta_{\pm}) \psi_n^-(\theta_{\pm}) \;\; \mbox{for} \;\; n < m \end{array} \right.
$$
where $\theta_- = -\theta_+$, $\theta_- \in [0,\pi]$ for $\omega \in [0,4]$, and
$W(\theta) = W[\psi^+,\psi^-] = \psi_n^+ \psi^-_{n+1} - \psi^+_{n+1} \psi^-_n$
is the discrete Wronskian, which is independent of $n \in \mathbb{Z}$.
We need to estimate
$$
\| \chi_0  R_V^{\pm}(\omega) f \|^2_{L^2_{\omega}(0,4)} =
\int_{-\pi}^{\pi} \frac{2 \chi^2_0  \sin \theta d \theta}{W^2(\omega)}
\left( \sum_{m = -\infty}^{n-1} \psi_n^+(\theta) \psi_m^-(\theta) f_m +
\sum_{m = n}^{\infty} \psi_n^-(\theta) \psi_m^+(\theta) f_m \right)^2.
$$
We may assume that $n \geq 1$ for definiteness and split
$$
\sum_{m = -\infty}^{n-1} \psi_m^-(\theta) f_m = \sum_{m=0}^{n-1} \psi_m^-(\theta) f_m +
\sum_{m=-\infty}^{-1} e^{i m \theta} f_m + \sum_{m=-\infty}^{-1} e^{i m \theta} (\Psi_m^- - 1) f_m :=
I_1 + I_2 + I_3
$$
and
$$
\sum_{m = n}^{\infty} \psi_m^+(\theta) f_m =
\sum_{m = n}^{\infty} e^{-i m \theta} f_m + \sum_{m = n}^{\infty}
e^{-i m \theta} \left( \Psi_m^+(\theta) - 1 \right) f_m := I_4 + I_5
$$
We are using the scattering theory from \cite{PS} to claim that
\begin{equation}
\label{properties-limiting-functions}
\sup_{\theta \in [-\theta_0,\theta_0]} \left( \| \Psi^{\pm}(\theta) \|_{l^{\infty}(\mathbb{Z}_{\pm})}
+  \| \langle n \rangle^{-1} \Psi^{\pm}(\theta) \|_{l^{\infty}(\mathbb{Z}_{\mp})} \right) < \infty,
\end{equation}
where $\langle n \rangle = (1 + n^2)^{1/2}$. Then, we have
\begin{eqnarray*}
| I_1 | & \leq & \left( \sum_{m=0}^{n-1} |\Psi_m^-(\theta)|^2 \right)^{1/2}
\left( \sum_{m=0}^{n-1} |f_m|^2 \right)^{1/2} \leq C_1 \langle n \rangle^{3/2}
\| f \|_{l^2}, \\
| I_3 | & \leq & \left( \sum_{m=-\infty}^{-1} |\Psi_m^-(\theta) - 1|^2 \right)^{1/2}
\left( \sum_{m=-\infty}^{-1} |f_m|^2 \right)^{1/2} \leq C_3
\left\| \sum_{k =-\infty}^m |m-k| |V_k| \right\|_{l^2_m(\mathbb{Z}_-)}
\| f \|_{l^2},\\
| I_5 | & \leq & \left( \sum_{m=n}^{\infty} |\Psi_m^+(\theta) - 1|^2 \right)^{1/2}
\left( \sum_{m=n}^{\infty} |f_m|^2 \right)^{1/2} \leq C_5
\left\| \sum_{l = m}^{\infty} |k-m| |V_k| \right\|_{l^2_m(\mathbb{Z}_+)}
\| f \|_{l^2},
\end{eqnarray*}
for some $C_1,C_3,C_5 > 0$. We note that
$$
\left\| \sum_{k =-\infty}^m |m-k| |V_k| \right\|_{l^2_m(\mathbb{Z}_-)} \leq
\left\| \sum_{k =-\infty}^m |m-k| |V_k| \right\|_{l^1_m(\mathbb{Z}_-)} \leq C_4
\| V \|_{l^1_2},
$$
for some $C_4 > 0$. Therefore, the brackets in $I_3$ and $I_5$ are bounded
if $V \in l^1_{2\si}$ for $\si > \frac{5}{2}$.
Since $I_2$ and $I_4$ are given by the discrete Fourier
transform, Parseval's equality implies that
$$
\int_{-\pi}^{\pi} \left( I^2_2 + I^2_4 \right) d \theta \leq C_2 \| f \|_{l^2}^2,
$$
for some $C_2 > 0$. Using now the fact that $|W(\theta)| \geq W_0$ and
$|\sin \theta| \leq C_0$ uniformly in $[-\theta_0,\theta_0]$, the support of
$\chi_0(\theta)$, and using the property (\ref{properties-limiting-functions}), we obtain
$$
\| \chi_0  R_V^{\pm}(\omega) f \|^2_{L^2_{\omega}(0,4)} \leq
C \left( 1 + \langle n \rangle^2 +  \langle n \rangle^3 \right) \| f \|^2_{l^2},
$$
which gives (\ref{eq:06}).

\end{document}